\title{Sherali-Adams and the binary encoding of combinatorial principles} 
\titlerunning{Sherali-Adams and the binary encoding of combinatorial principles}
\author{Stefan Dantchev}{Department of Computer Science, Durham University, U.K.}{}{}{}
\author{Abdul Ghani}{Department of Computer Science, Durham University, U.K.}{}{}{}
\author{Barnaby Martin}{Department of Computer Science, Durham University, U.K.}{}{}{}
\authorrunning{S. Dantchev, A. Ghani and B. Martin}
\keywords{
Propositional Proof Complexity, Lift-and-Project Methods, Binary encoding
}
\theoremstyle{plain}
\newtheorem{thm}{\protect\theoremname}
\theoremstyle{plain}
\newtheorem{lem}[thm]{\protect\lemmaname}
\theoremstyle{plain}
\newtheorem{fact}[thm]{\protect\factname}
\theoremstyle{plain}
\newtheorem{cor}[thm]{\protect\corollaryname}
\providecommand{\corollaryname}{Corollary}
\providecommand{\factname}{Fact}
\providecommand{\lemmaname}{Lemma}
\providecommand{\theoremname}{Theorem}
\newcommand{\PHP}[0]{\ensuremath{\textsc{PHP}}}
\newcommand{\LNP}[0]{\ensuremath{\textsc{LNP}}}
\newcommand{\TLNP}[0]{\ensuremath{\textsc{TLNP}}}
\newcommand{\SA}[0]{\ensuremath{\textsc{SA}}}
\newcommand{\LS}[0]{\ensuremath{\textsc{LS}}}
\newcommand{\ppi}[0]{\ensuremath{\Pi^{n-1}_{n-2}}}
\DeclareMathOperator{\C}{\mathsf{C_n}}
\DeclareMathOperator{\UC}{\mathsf{Un-C_n}}
\DeclareMathOperator{\BC}{\mathsf{Bin-C_n}}
\newcommand{\tuple}[1]{\mathbf{#1}}
\begin{document}

\maketitle

\begin{abstract}
We consider the Sherali-Adams (\SA) refutation system together with the unusual \emph{binary} encoding of certain combinatorial principles. For the unary encoding of the Pigeonhole Principle and the Least Number Principle, it is known that linear rank is required for refutations in \SA, although both admit refutations of polynomial size. We prove that the binary encoding of the Pigeonhole Principle requires exponentially-sized \SA\ refutations, whereas the binary encoding of the Least Number Principle admits logarithmic rank, polynomially-sized \SA\ refutations. We continue by considering a refutation system between \SA\ and Lasserre (Sum-of-Squares). In this system, the Least Number Principle requires linear rank while the Pigeonhole Principle becomes constant rank.
\end{abstract}

\section{Introduction}
It is well-known that questions on the satisfiability of propositional CNF formulae may be reduced to questions on feasible solutions for certain Integer Linear Programs (ILPs). In light of this, several ILP-based proof (more accurately, refutation) systems have been suggested for propositional CNF formulae, based on proving that the relevant ILP has no solutions. Typically, this is accomplished by relaxing an ILP to a continuous Linear Program (LP), which itself may have (non-integral) solutions, and then modifying this LP iteratively until it has a solution iff the original ILP had a solution \textcolor{black}{(which happens at the point the LP has no solution)}. Among the most popular ILP-based refutation systems are Cutting Planes \cite{Gomory1960,Chvatal73} and several proposed by Lov\'asz and Schrijver \cite{LovaszS1991}.

Another method for solving ILPs was proposed by Sherali and Adams \cite{SheraliA90}, and was introduced as a propositional refutation system in \cite{StefanGap}. Since then it has been considered as a refutation system in the further works \cite{TCS2009,NarrowMaximallyLong}. The Sherali-Adams system (\SA) is of significant interest as a static variant of the Lov\'asz-Schrijver system without semidefinite cuts (\LS). It is proved in \cite{laurent01comparison} that the \SA\ rank of a polytope is less than or equal to its \LS\ rank; hence we may claim that \SA\  is at least as strong as \LS\ (though it is unclear whether it is strictly stronger).

Various fundamental combinatorial principles used in Proof Complexity may be given in first-order logic as sentences $\varphi$ with no finite models and in this article we will restrict attention to those in $\Pi_2$-form. Riis discusses in \cite{SorenGap} how to generate from \textcolor{black}{prenex} $\varphi$ a family of CNFs, the $n$th of which encodes that $\varphi$ has a model of size $n$, which are hence contradictions. Following Riis, it is typical to encode the existence of the witnesses \textcolor{black}{to an existentially quantified variable in longhand with a big disjunction, of the form $S_{\tuple{a},1} \vee \ldots \vee S_{\tuple{a},n}$, that we designate the \emph{unary encoding}. Here the arity of $\tuple{a}$ is the number of universally quantified variables preceding the existentially quantified variable, on which it might depend.}

As recently investigated in the  works \cite{DBLP:journals/siamcomp/FilmusLNRT15,DBLP:journals/jacm/BonacinaG15,DBLP:journals/siamcomp/BonacinaGT16,DBLP:journals/combinatorica/LauriaPRT17,DBLP:conf/focs/HrubesP17,DantchevGalesiMartin}, it may also be possible to encode the existence of such witnesses {\em succinctly} by the use of a \emph{binary encoding}. Essentially, the existence of the witness is now given implicitly as any propositional assignment to the relevant variables $S_{\tuple{a},1},\ldots,S_{\tuple{a},\log n}$, which we call $S$ for Skolem, gives a witness; whereas in the unary encoding a solitary true literal tells us which is the witness.
Combinatorial principles encoded in binary are interesting to study for Resolution-type systems since they still preserve the hardness of the combinatorial  principle while giving a more  succinct propositional representation. In certain cases this leads to obtain significant lower bounds   in an easier way than for the unary case \cite{DBLP:journals/siamcomp/FilmusLNRT15,DBLP:journals/siamcomp/BonacinaGT16,DBLP:journals/combinatorica/LauriaPRT17,DantchevGalesiMartin}. 

The binary encoding also implicitly enforces an at-most-one constraint at the same time as it does at-least-one. When some big disjunction $S_{\tuple{a},1}\vee \ldots \vee S_{\tuple{a},n}$ of the unary encoding is translated to constraints for an ILP it enforces $S_{\tuple{a},1} + \ldots + S_{\tuple{a},n} \geq 1$. Were we to insist that $S_{\tuple{a},1} + \ldots + S_{\tuple{a},n} = 1$ then we encode immediately also the at-most-one constraint. We paraphrase this variant as being (the unary) \emph{encoding with equalities} or ``\SA-with-equalities''.
 
The Pigeonhole Principle (\PHP), which essentially asserts that $n$ pigeons may not be assigned to $n-1$ holes such that no hole has more than one pigeon, and the Least Number Principle (\LNP), which asserts that a partially-ordered $n$-set possesses a minimal element, are ubiquitous in Proof Complexity. Typically (and henceforth) we work under the same name with their negations, which are expressible in ($\Pi_2$) first-order logic as formulae with no finite models.

In \cite{TCS2009} we have proved that the \SA\ rank of (the polytopes associated with) (the unary encoding of) each of the Pigeonhole Principle and Least Number Principles is $n-2$ (where $n$ is the number of pigeons and elements in the poset, respectively). It is known that \SA\ polynomially simulates Resolution (see \mbox{e.g.} \cite{TCS2009}) and it follows there is a polynomially-sized refutation in \SA\ of the Least Number Principle. That there is a polynomially-sized refutation in \SA\ of the Pigeonhole Principle is noted in  \cite{Rhodes07}.

In this paper we consider the binary encodings of the Pigeonhole Principle and the Least Number Principle as ILPs. We additionally consider their (unary) encoding with equalities. We first prove that the binary encoding of the Pigeonhole Principle requires exponential size in \SA. We then prove that the (unary) encoding of the Least Number Principle with equalities has \SA\ rank 2 \textcolor{black}{and polynomial size}. This allows us to prove that the binary encoding of the Least Number Principle has \textcolor{black}{\SA\ rank at most $2 \log n$ and polynomial size}.

The divergent behaviour of these two combinatorial principles is tantalising -- while the Least Number Principle becomes easier for \SA\ in the binary encoding (in terms of rank), the Pigeonhole Principle becomes harder (in terms of size). Such variable behaviour has been observed for the Pigeonhole Principle in Resolution, where the binary encoding makes it easier for treelike Resolution (in terms of size) \cite{DantchevGalesiMartin}. 

We continue by considering a refutation system \SA+Squares which is between \SA\ and Lasserre (Sum-of-Squares) \cite{Lasserre2001} (see also \cite{laurent01comparison} for comparison between these systems). \textcolor{black}{\SA+Squares appears as Static LS$_+$ in \cite{russians}.} In this system one can always assume the non-negativity of (the linearisation of) any squared polynomial. In contrast to our system \SA-with-equalities, we \textcolor{black}{see} that the rank of the unary encoding of the Pigeonhole Principle is 2, while the rank of the Least Number Principle is linear. We \textcolor{black}{prove this by showing} a certain moment matrix in positive semidefinite. Our rank results for the unary encoding can be contrasted in Table \ref{tab:wavy}.  

\subsection{Related Work}

In another paper \cite{OtherPaper}, the present authors show a lower bound for \LNP\ in Lasserre that is a natural companion to the lower bound given in Corollary \ref{cor:9}. The lower bound in Lasserre is $\Omega(\sqrt{n})$, which is weaker than the linear lower bound of Corollary \ref{cor:9}, while requiring a significantly more sophisticated proof. 


\begin{table}[h]
\begin{center}
\begin{tabular}{|c|c|c|c|}
\hline
unary case & \SA & \SA-with-equalities & \SA+Squares \\
\hline
\PHP & linear & linear & constant \\
\hline 
\LNP & linear & constant & linear \\
\hline
\end{tabular}
\hspace{0.5cm}
\begin{tabular}{|c|c|}
\hline
binary case & \SA \\
\hline
\PHP & exponential  \\
\hline 
\LNP & polynomial  \\
\hline
\end{tabular}
\end{center}

\begin{center}
\begin{tabular}{|c|c|c|c|}
\hline
unary case & \SA & \SA-with-equalities & \SA+Squares \\
\hline
\PHP & \cite{TCS2009} & Appendix (\cite{TCS2009}) & Theorem \ref{thm:10} (\cite{russians}) \\
\hline 
\LNP & \cite{TCS2009} & Theorem \ref{thm:8} & Theorem \ref{thm:11}\\
\hline
\end{tabular}
\hspace{0.5cm}
\begin{tabular}{|c|c|}
\hline
binary case & \SA \\
\hline
\PHP & Theorem \ref{thm:size-bound}  \\
\hline 
\LNP & Corollary \ref{cor:9}  \\
\hline
\end{tabular}
\end{center}
\caption{Rank based complexity for the unary encoding in different systems (on the left) and size based complexity for the binary encoding (on the right). The lower table shows where the corresponding result is proved.}
\label{tab:wavy}
\end{table}

\section{Preliminaries}
\label{sec:pre}

Let $[m]$ be the set $\{1,\ldots,m\}$.
Let us assume, without loss of much generality, that $n$ is a power of $2$. Cases where $n$ is not a power of $2$ are handled in the binary encoding by explicitly forbidding possibilities.

If $P$ is a propositional variable, then $P^0=\overline{P}$ indicates the negation of $P$, while $P^1$ indicates $P$.

From a CNF formula $F:= C_1 \wedge \ldots \wedge C_r$ in variables $v_1,\ldots,v_m$ we generate an ILP in $2m$ variables $Z_{v_\lambda}, Z_{\neg v_\lambda}$ ($\lambda \in [m]$). For literals $l_1,\ldots,l_t$ s.t. $(l_1 \vee \ldots \vee l_t)$ is a clause of $F$ we have the constraining inequality 
\[ (\ref{sec:pre}.1) \ \ \ Z_{l_1} + \ldots + Z_{l_t} \geq 1. \]
We also have, for each $\lambda \in [m]$, the equalities of negation
\[ (\ref{sec:pre}.2) \ \ \ Z_{v_\lambda} + Z_{\neg v_\lambda} = 1 \]
together with the bounding inequalities
\[ (\ref{sec:pre}.3) \ \ \ 0 \leq Z_{v_\lambda} \leq 1 \ \ \ \mbox{and} \ \ \ 0 \leq Z_{\neg v_\lambda} \leq 1. \]
Let $\mathcal{P}_0^F$ be the polytope specified by these constraints on the real numbers. It is clear that this polytope contains integral points iff the formula $F$ is satisfiable.

Sherali-Adams (\SA) provides a static refutation method that takes the polytope $\mathcal{P}^F_0$ defined by $(\ref{sec:pre}.1)-(\ref{sec:pre}.3)$ and \emph{$r$-lifts} it to another polytope $\mathcal{P}^F_r$ in $\sum_{\lambda=0}^{r+1} {2m\choose \lambda}$ dimensions.
Specifically, the variables involved in defining the polytope $\mathcal{P}^F_r$ are $Z_{l_1 \wedge \ldots \wedge l_{r+1}}$
($l_1,\ldots,l_{r+1}$ literals of $F$) and $Z_\emptyset$. 
Let us say that the term $Z_{l_1 \wedge \ldots \wedge l_{r+1}}$ has \emph{rank} $r$.
Note that we accept commutativity and idempotence of the $\wedge$-operator, e.g. $Z_{l_1 \wedge l_2}=Z_{l_2 \wedge l_1}$ and $Z_{l_1 \wedge l_1}=Z_{l_1}$. Also $\emptyset$ represents the empty conjunct (boolean true); hence we set $Z_\emptyset:=1$.
For literals $l_1,\ldots,l_t$, s.t. $(l_1 \vee \ldots \vee l_t)$ is a clause of $F$, we have the constraining inequalities
\[ (\ref{sec:pre}.1') \ \ \ Z_{l_1 \wedge D} + \ldots + Z_{l_t \wedge D} \geq Z_{D}, \]
for $D$ any conjunction of at most $r$ literals of $F$. We also have, for each $\lambda \in [m]$ and $D$ any conjunction of at most $r$ literals, the equalities of negation
\[ (\ref{sec:pre}.2') \ \ \ Z_{v_\lambda \wedge D} + Z_{\neg v_\lambda \wedge D} = Z_D \]
together with the bounding inequalities
\[ (\ref{sec:pre}.3') \ \ \ 0 \leq Z_{v_\lambda \wedge D} \leq Z_{D} \ \ \ \mbox{and} \ \ \ 0 \leq Z_{\neg v_\lambda \wedge D} \leq Z_{D}. \]
The $\SA$ \emph{rank} of the polytope $\mathcal{P}^F_0$ (formula $F$) is the minimal $i$ such that $\mathcal{P}^F_i$ is empty. Thus, the notation rank is overloaded in a consistent way, since $\mathcal{P}^F_i$ is specified by inequalities in variables of rank at most $i$. The largest $r$ for which $\mathcal{P}^F_r$ need be considered is $2m-1$, since beyond that there are no new literals to lift by. Even that is somewhat further than necessary, largely because, if the conjunction $D$ contains both a variable and its negation, it may be seen from the equalities of negation that $Z_D=0$. In fact, it follows from \cite{laurent01comparison} that the \SA\ rank of $\mathcal{P}^F_0$ is always $\leq m-1$ (for a contradiction $F$). Of course, in general, $\mathcal{P}^F_0$ is non-empty; in fact, if $F$ is a contradiction that does not admit refutation by unit clause propagation, this is the case (we may use unit clause propagation to assign $0-1$ values to some variables, thereafter assigning $1/2$ to those variables remaining). Note that it follows that any unsatisfiable Horn CNF $F$ (i.e., where each clause contains at most one positive variable) has \SA\ rank $0$, since $F$ must then admit refutation by unit clause propagation (which may be used to demonstrate $\mathcal{P}^F_0$ empty). 

The number of defining inequalities of the polytope $\mathcal{P}^F_r$ is exponential in $r$; hence a naive measure of \SA\ size would see it grow more than exponentially in rank. However, not all of the inequalities $(\ref{sec:pre}.1') - (\ref{sec:pre}.3')$ may be needed to specify the empty polytope. We therefore define the \SA\ \emph{size} of the polytope $\mathcal{P}^F_0$ (formula $F$) to be the size of a minimal subset of the inequalities $(\ref{sec:pre}.1') - (\ref{sec:pre}.3')$ of $\mathcal{P}^F_{2m}$ that specifies the empty polytope.

We note that, for $r' \leq r$, the defining inequalities of $\mathcal{P}^F_{r'}$ are consequent on those of $\mathcal{P}^F_{r}$. Equivalently, any solution to the inequalities of $\mathcal{P}^F_{r}$ gives rise to solutions of the inequalities of $\mathcal{P}^F_{r'}$, when projected on to its variables. If $D'$ is a conjunction of $r'$ literals, then $Z_{D \wedge D'} \leq Z_{D}$ follows by transitivity from $r'$ instances of $(\ref{sec:pre}.3')$. We refer to the property $Z_{D \wedge D'} \leq Z_{D}$ as \emph{monotonicity}. \textcolor{black}{Finally, let us note that $Z_{v \wedge \neg v}=0$ holds in $\mathcal{P}^F_1$ and follows from a single lift of an equality of negation.}

Let us now consider principles which are expressible as first-order formulae, with no finite models, in $\Pi_2$-form,  \mbox{i.e.}  as $\forall \vec x \exists \vec w \varphi(\vec x,\vec w)$ where $\varphi(\vec x,\vec y)$ is a formula built on a family of relations $\vec R$. 
For example  the \emph{Least Number Principle}, which states that a finite partial order has a minimal element is one of such principles. 
Its negation can be expressed in $\Pi_2$-form as:  
\[\forall x,y,z \exists w \ \neg R(x,x) \wedge (R(x,y) \wedge R(y,z) \rightarrow R(x,z)) \wedge R(x,w).\]
This can be translated into a unsatisfiable CNF using a unary encoding of the witness, as shown below alongside the binary encoding.

\noindent\begin{minipage}{.45\linewidth}
	\begin{gather*}
	\LNP_n: \mbox{\em\underline{Unary encoding}}\\
	\overline P_{i,i} \qquad \forall i \in [n] \\
	\overline P_{i,j} \vee \overline P_{j,k} \vee P_{i,k} \qquad \forall i,j,k \in [n] \\
	\overline{S}_{i,j} \vee P_{i,j}  \qquad \forall i,j \in [n]  \\
	\textstyle \bigvee_{i\in [n]} S_{i, j} \qquad \forall j \in [n] \\
	\end{gather*}
\end{minipage}%
\begin{minipage}{.45\linewidth}
	\begin{gather*}
	\LNP_n: \mbox{\em \underline{Binary encoding}} \\
	\overline P_{i,i} \qquad \forall x \in [n] \\
	\overline P_{i,j} \vee \overline P_{j,k} \vee P_{i,k} \qquad \forall i,j,k \in [n] \\
	\textstyle \bigvee_{i\in [\log n]} S^{1-a_i}_{i,j} \vee P_{j,a} \qquad \forall j, a \in [n]\\
	\mbox{where $a_1\ldots a_{\log n} = \mathrm{bin}(a)$} \\
	\end{gather*}
\end{minipage} \vskip\baselineskip 
\noindent Note that we placed the witness in the Skolem variables $S_{i,x}$ as the first argument and not the second, as we had in the introduction. This is to be consistent with the $P_{i,j}$ and the standard formulation of \LNP\ as the least, and not greatest, number principle.

Indeed, one can see how to generate a binary encoding of $\mathrm{C}$ from any  combinatorial principle $\mathrm{C}$ expressible as  
a first order formula in  $\Pi_2$-form with no finite models. 
Exact details can be found in Definition 4 in \cite{DantchevGalesiMartin} and are reproduced here in the appendix.

As a second example we consider  the \emph{Pigeonhole Principle} which  states  that a total mapping from $[m]$ to 
$[n]$ has necessarily a collision when $m$ and $n$ are integers with  $m>n$. The negation of its  relational form \textcolor{black}{for $n$ one less than $m$} can be expressed as a $\Pi_2$-formula as 
$$\forall x,y,z \exists w \ \neg R(x,0) \wedge (R(x,z) \wedge R(y,z) \rightarrow x=y) \wedge R(x,w)$$ 
\textcolor{black}{where $0$ represents the difference between $m$ and $n$}. Its usual unary and  binary propositional  encoding are:

\noindent\begin{minipage}{.45\linewidth}
	\begin{gather*}
	\PHP^m_n: \mbox{\em\underline{Unary encoding}} \\
	\textstyle \bigvee_{j=1}^{n}P_{i,j}  \qquad \forall i \in [m] \\
	\overline P_{i,j} \vee \overline P_{i',j} \qquad \forall i \not = i'\in [m], j\in [n]\\
	\end{gather*}
\end{minipage}%
\begin{minipage}{.45\linewidth}
	\begin{gather*}
	 \PHP^m_n: \mbox{\em \underline{Binary encoding}}\\
	 \textstyle \bigvee_{j=1}^{\log n} P_{i,j}^{1 - a_j}  \vee  \bigvee_{j=1}^{\log n}P_{i',j}^{1 - a_j}\\\forall a \in [n],  i \not = i' \in [m] \\
	 \mbox{where $a_1\ldots a_{\log n} = \mathrm{bin}(a)$} \\
	\end{gather*}
\end{minipage} \vskip\baselineskip
\noindent \textcolor{black}{where $0$ no longer appears now $m$ and $n$ are explicit}.
Properly, the Pigeonhole Principle should also admit $S$ variables (as with the \LNP) but one notices that the existential witness $w$ to the type \emph{pigeon} is of the distinct type \emph{hole}. Furthermore, pigeons only appear on the left-hand side of atoms $R(x,z)$ and holes only appear on the right-hand side. 
For the Least Number Principle instead, the transitivity axioms effectively enforce the type of $y$ appears on both the left- and right-hand side of atoms $R(x,z)$. This accounts for why, in the case of the Pigeonhole Principle, we did not need to introduce any new variables to give the binary encoding, yet for the Least Number Principle a new variable $S$ appears. However, our results would hold equally were we to have chosen the more complicated form of the Pigeonhole Principle. \textcolor{black}{Note that our formulation of the Least Number Principle is symmetric in the elements and our formulation of the Pigeonhole Principle is symmetric is each of the pigeons and holes.}

When we consider the Sherali-Adams $r$-lifts of, \mbox{e.g.}, the Least Number Principle, we will identify terms of the form $Z_{P_{i,j} \wedge \overline{S}_{i',j'} \wedge \ldots}$ as $P_{i,j} \overline{S}_{i',j'} \ldots$. Thus, we take the subscript and use overline for negation and concatenation for conjunction. This prefigures the multilinear notation we will revert to in Section \ref{sec:SA+AS}, but one should view for now $P_{i,j} \overline{S}_{i',j'} \ldots$ as a single variable and not a multilinear monomial.

Finally, we wish to discuss the encoding of the Least Number Principle and Pigeonhole Principle as ILPs \emph{with equality}. For this, we take the unary encoding but instead of translating the wide clauses (\mbox{e.g.} from the \LNP) from  $\bigvee_{i\in [n]} S_{i,x}$ to $S_{1,x} + \ldots + S_{n,x} \geq 1$, we instead use $S_{1,x} + \ldots + S_{n,x} = 1$. This makes the constraint at-least-one into exactly-one (which is a priori enforced in the binary encoding). A reader who does not wish to address the appendix should consider the Least Number Principle as the combinatorial principle of the following lemma.
\begin{lem} \label{lem:binconv}
Let $\mathrm{C}$ be any combinatorial principle expressible as a first order formula in  $\Pi_2$-form with no finite models.
Suppose the unary encoding of $\mathrm{C}$ with equalities has an \SA\ rank $r$ and size $s$. Then the binary encoding of $\mathrm{C}$ has an \SA\ rank at most $r \log n$ and size at most $s$.
\end{lem}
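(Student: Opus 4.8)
The plan is to lift any refutation of the unary-with-equalities encoding to the binary encoding by a single substitution. Writing $a_1\ldots a_{\log n}=\mathrm{bin}(a)$, define the map $\sigma$ that fixes every relation variable $P_{i,j}$ and sends each unary Skolem variable to the degree-$\log n$ monomial $\sigma(S_{a,j}) = \prod_{k=1}^{\log n} S^{a_k}_{k,j}$, so that a true assignment to $\sigma(S_{a,j})$ records, bit by bit, that the witness for $j$ is $a$. I extend $\sigma$ multiplicatively to conjunctions, reading $\sigma(\overline{S}_{a,j}) = 1 - \prod_k S^{a_k}_{k,j}$, and hence to the linearised terms $Z_D$ of the unary polytope. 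The whole argument is then to show that $\sigma$ carries every defining inequality of the unary-with-equalities polytope at rank $r$ to a valid inequality of the binary polytope at rank at most $r\log n$, while preserving the number of axioms used.

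The reason for working with the \emph{equality} encoding is that the totality axiom is sent to a tautology: since $a$ ranges over all $\log n$-bit strings, $\sum_{a\in[n]}\sigma(S_{a,j}) = \prod_{k=1}^{\log n}\bigl(S_{k,j}+\overline{S}_{k,j}\bigr) = 1$, which is exactly what the $\log n$ binary equalities of negation yield at each lift; likewise $\sigma(S_{a,j})\,\sigma(S_{a',j})=0$ for $a\neq a'$, because the two monomials disagree in some bit $k$ and so contain the factor $S_{k,j}\overline{S}_{k,j}=0$ available already in $\mathcal{P}^F_1$. Thus the constraint $\sum_a S_{a,j}=1$ (and the at-most-one it encodes) is automatically respected. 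The relation axioms are common to both encodings and $\sigma$ fixes them. The only genuinely combinatorial clause is the witness-linking clause $\overline{S}_{a,j}\vee P_{a,j}$, whose image is $Z_{P_{a,j}}\ge Z_{\sigma(S_{a,j})}$; I would derive this from the single binary clause $\bigvee_{k} S^{1-a_k}_{k,j}\vee P_{a,j}$ by one lift, namely by multiplying it by the conjunction $D=\prod_k S^{a_k}_{k,j}$. Each clause literal $S^{1-a_k}_{k,j}$ then meets its negation inside $D$, so its term vanishes, leaving $Z_{P_{a,j}\wedge D}\ge Z_D$, and monotonicity $Z_{P_{a,j}\wedge D}\le Z_{P_{a,j}}$ finishes it.

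With these translations in hand I would run the refutation ``in the image of $\sigma$''. Given any feasible point $z$ of the binary polytope at rank $r\log n$, let $z[\,\cdot\,]$ denote its linear extension to polynomials; then the projection $Z_D\mapsto z[\sigma(D)]$ is a candidate feasible point of the unary polytope at rank $r$. A conjunction $D$ of unary literals of rank $r$ expands under $\sigma$ into a conjunction of binary literals whose length grows by a factor of at most $\log n$, so every term stays within rank $r\log n$ and $z[\sigma(D)]$ is defined; by the previous paragraph this projection satisfies every unary-with-equalities axiom. Hence emptiness of the unary polytope at rank $r$ forces emptiness of the binary polytope at rank $r\log n$. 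For the size bound one tracks only the axioms appearing in a minimal refutation of size $s$: the relation axioms map one-to-one to themselves, each witness-linking clause maps to the single binary witness clause above, and the structural axioms (bounding, negation, totality) become consequences of the always-present binary structural axioms; so at most $s$ binary axioms are needed.

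The hard part will be the bookkeeping that keeps the rank at $r\log n$ and the careful treatment of the negated Skolem literals: $\sigma(\overline{S}_{a,j})=1-\prod_k S^{a_k}_{k,j}$ is not a single monomial, so each substituted inequality must be re-expanded as a genuine nonnegative combination of binary terms before it can be matched against the lifted binary axioms, and one must verify that this re-expansion does not push any term past rank $r\log n$. I would also need to check, for a general $\Pi_2$ principle and not merely \LNP, that every clause of the unary encoding falls into exactly one of the three classes --- relation axiom, witness-linking clause, or structural (in)equality --- so that the one-to-one axiom correspondence underlying the size bound genuinely holds; this is the point at which the precise form of the binary encoding from the appendix has to be invoked.
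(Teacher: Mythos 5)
Your proposal is correct and follows essentially the same route as the paper's (much terser) proof: substitute each unary Skolem variable $S_{a,j}$ by the monomial $\prod_k S^{a_k}_{k,j}$, observe that the totality-with-equality axiom becomes $\prod_k(S_{k,j}+\overline{S}_{k,j})=1$, derivable from the lifted equalities of negation, and that the witness-linking inequality follows from the binary clause because each literal $S^{1-a_k}_{k,j}$ is annihilated by $S_{k,j}\overline{S}_{k,j}=0$. The subtleties you flag at the end (negated Skolem literals and the general $\Pi_2$ case) are real but are left implicit in the paper as well, so no further action is needed beyond the checks you already outline.
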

\begin{proof}
We take the \SA\ refutation of the unary encoding of $\mathrm{C}$ with equalities of rank $r$, in the form of a set of inequalities, and build an \SA\ refutation of the binary encoding of $\mathrm{C}$ of rank $r \log n$, by substituting terms $S_{x,a}$ in the former with  $S^{a_1}_{x,1}\ldots S^{a_{\log n}}_{x,\log n}$, where $a_1\ldots a_{\log n} = \mathrm{bin}(a)$, in the latter. Note that the equalities of the form
\[\sum_{a_1 \ldots a_{\log n}=\mathrm{bin}(a)} S^{a_1}_{x,1}\ldots S^{a_{\log n}}_{x,\log n} = 1\] follow from the inequalities (\ref{sec:pre}.2') and (\ref{sec:pre}.3'). Further, inequalities of the form $S^{a_1}_{x,1}\ldots S^{a_{\log n}}_{x,\log n} \leq P_{x,a}$ follow since $S_{x,j}\overline{S}_{x,j}=0$ for each $j \in [\log n]$.
\end{proof}

\section{The lower bound for the binary Pigeonhole Principle}


In this section we study the inequalities derived from the binary encoding the Pigeonhole principle.
We first prove a \textcolor{black}{certain \SA\ rank} lower bound for a version of
the binary \PHP, in which only a subset of the holes is available.
\begin{lem}
\label{lem:degree-bound}Let $H\subseteq\left[n\right]$ be a subset
of the holes and let us consider  binary $\PHP_{|H|}^{m}$ where each
pigeon can go to a hole in $H$ only. Any \SA\ refutation of  binary $\PHP_{|H|}^{m}$
\textcolor{black}{involves a term that mentions at least $\left|H\right|$ pigeons}.
\end{lem}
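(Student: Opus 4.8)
The plan is to argue by contradiction. Suppose there were an \SA\ refutation of binary $\PHP_{|H|}^{m}$ in which \emph{every} term $Z_D$ mentions at most $|H|-1$ pigeons. A refutation is a derivation of infeasibility from a subset of the inequalities $(\ref{sec:pre}.1')$--$(\ref{sec:pre}.3')$, so it suffices to exhibit an assignment of real values to all terms mentioning fewer than $|H|$ pigeons that simultaneously satisfies every one of these inequalities in which only such terms occur; this contradicts the existence of the refutation and forces some term to mention at least $|H|$ pigeons.

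The assignment I would use comes from uniform random injections into $H$. For a set $I$ of pigeons with $|I|\le|H|$, let $\mu_I$ be the uniform distribution over injective maps $f\colon I\to H$, where a pigeon $i$ is assigned the bits $\mathrm{bin}(f(i))$. Writing $I_D$ for the set of pigeons mentioned in a conjunction $D$, I set $Z_D:=\Pr_{f\sim\mu_{I_D}}[f\text{ satisfies }D]$ whenever $|I_D|\le|H|$ (and $Z_\emptyset=1$). The structural fact to establish first is that uniform injections are \emph{marginally consistent}: if $I'\subseteq I$ then the restriction of $\mu_I$ to $I'$ is exactly $\mu_{I'}$, because each injection of $I'$ extends to the same number of injections of $I$. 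This is what makes $Z$ well-defined across terms of different sizes.

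Granting marginal consistency, the bounding inequalities $(\ref{sec:pre}.3')$ are immediate, since $Z$ is a probability and $\Pr[v\wedge D]\le\Pr[D]$; and the equalities of negation $(\ref{sec:pre}.2')$, namely $Z_{v\wedge D}+Z_{\overline v\wedge D}=Z_D$, follow by marginalising out the bit $v$ (trivial when $v$ concerns a pigeon already in $I_D$, and exactly marginal consistency of $\mu_{I_D\cup\{i\}}$ down to $\mu_{I_D}$ when $v$ introduces a new pigeon $i$). For the clause inequalities $(\ref{sec:pre}.1')$ I would use the union-bound direction together with the observation that every $f$ in the support of any $\mu_I$ is injective into $H$, so it collides nowhere and places no pigeon outside $H$; hence every no-collision clause, and every ``out of $H$'' clause, is satisfied by $f$. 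For a no-collision clause on pigeons $i\ne i'$ with side term $D$, all the terms $Z_{P_{i,j}^{1-a_j}\wedge D}$, $Z_{P_{i',j}^{1-a_j}\wedge D}$ and $Z_D$ are marginals of $\mu_{I_D\cup\{i,i'\}}$, and under this joint distribution $\sum_j\Pr[P_{i,j}^{1-a_j}\wedge D]\ge\Pr[(i\text{ misses }a)\wedge D]$, and likewise for $i'$; summing and using that the clause is always true gives the required $\ge\Pr[D]=Z_D$.

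The step I expect to be the crux, and the one that pins the threshold at exactly $|H|$, is the bookkeeping in this last estimate: verifying the clause inequality forces me to pass to the joint distribution over $I_D\cup\{i,i'\}$, the side pigeons \emph{together with both} clause pigeons, even though each individual term mentions only one of $i,i'$. The key point is that the term $P_{i,j}^{1-a_j}\wedge D$ is itself one of the monomials of the refutation, so by assumption it mentions at most $|H|-1$ pigeons; thus $I_D\cup\{i\}$ has at most $|H|-1$ pigeons, and adjoining the single further pigeon $i'$ yields $|I_D\cup\{i,i'\}|\le|H|$. This is precisely the largest set for which an injection into $H$ still exists, so $\mu_{I_D\cup\{i,i'\}}$ is well-defined and the inequality holds. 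Consequently $Z$ satisfies every inequality used by the supposed refutation, contradicting infeasibility and forcing some term to mention at least $|H|$ pigeons.
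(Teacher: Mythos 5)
Your proof is correct and follows essentially the same route as the paper's: both define the valuation by uniform random injective assignments of the mentioned pigeons into $H$ (the paper phrases this as a uniform perfect matching on an arbitrary extension to exactly $|H|$ pigeons, which gives the identical values by the marginal consistency you make explicit), and both verify the negation equalities directly and the collision clauses by noting every injection satisfies them. Your bookkeeping for why the joint distribution on $I_D\cup\{i,i'\}$ exists is exactly the point that pins the threshold at $|H|$, and is if anything spelled out more carefully than in the paper.
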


\begin{proof}
We get a valuation from a partial matching in an obvious way. We say
that a term $T=\text{\ensuremath{\prod P_{i_{j},k_{j}}^{b_{j}}}}$
mentions the set of pigeons $M = \left\{ i_{j}\right\} $. Let us denote
the number of available holes by $n' := |H|$. Every term that mentions at
most $n'$ pigeons is assigned a value $v\left(T\right)$ as follows.
The set of pigeons mentioned in $M$ is first extended arbitrarily to a
set $M'$ of exactly $n'$ pigeons. $v\left(T\right)$ is then the
probability that a matching between $M'$ and $H$ taken uniformly
at random is consistent with the term $T$. In other words, $v\left(T\right)$
is the number of perfect matchings between $M'$ and $H$ that are
consistent with $T$, divided by the total, $(n')$!. Obviously, this
value does not depend on how $P$ is extended to $M'$. Also, it is
symmetric, i.e. if $\pi$ is a permutation of the pigeons, $v\left(\ensuremath{\prod P_{i_{j},k_{j}}^{b_{j}}}\right)=\ensuremath{v\left(\prod P_{\pi\left(i_{j}\right),k_{j}}^{b_{j}}\right)}$.

All lifts of \textcolor{black}{axioms of equality} $P_{j,k} + \overline{P}_{j,k} = 1$ are automatically satisfied since a
matching consistent with $T$ is consistent either with $TP_{j,k}^{b}$
or with $TP_{j,k}^{1-b}$ but not with both, and thus
\[
v\left(T\right)=v\left(TP_{j,k}^{b}\right)+v\left(TP_{j,k}^{1-b}\right).
\]
Regarding the lifts of the disequality of two pigeons in one hole, that is the inequalities coming from the only clauses in the binary encoding of \PHP, it
is enough to observe that it is consistent with any perfect matching,
i.e. at least one variable on the LHS is one under such a matching.
Thus, for a term $T$, any perfect matching consistent with $T$ will
also be consistent with $TP_{i,k}^{1-b_{k}}$ or with $TP_{\textcolor{black}{i},k}^{1-b_{k}}$
for some $k$.
\end{proof}
The proof of the size lower bound for the  binary $\text{PHP}_{n}^{n+1}$
then is by a standard random-restriction argument combined with the
\textcolor{black}{rank} lower bound above. Assume w.l.o.g that $n$ is a perfect power
of two. For the random restrictions $\mathcal{R}$, we consider the
pigeons one by one and with probability $\nicefrac{1}{4}$ we assign
the pigeon uniformly at random to one of the holes still available.
We first need to show that the restriction is "good" with high probability, \mbox{i.e.} neither too big nor too small.
The former is needed so that in the restricted version we have a good lower bound,
while the latter will be needed to show that a good restriction \textcolor{black}{coincides} well any reasonably big term.
A simple application of a Chernoff bound gives the following
\begin{fact}
\label{fact:good-restrictions}If $\mathcal{\left|R\right|}$ is the
number of pigeons (or holes) \textcolor{black}{assigned} by $\mathcal{R},$
\begin{enumerate}
\item the probability that $\mathcal{\left|R\right|}<\frac{n}{8}$ is at
most $e^{-\nicefrac{n}{32}}$, and
\item the probability that $\mathcal{\left|R\right|}>\frac{3n}{8}$ is at
most $e^{-\nicefrac{n}{48}}$.
\end{enumerate}
\end{fact}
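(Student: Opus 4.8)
The plan is to recognise $\left|\mathcal{R}\right|$ as an (essentially) binomially distributed random variable and then apply the standard multiplicative Chernoff bound separately to each of the two tails. Recall that $\mathcal{R}$ processes the pigeons one at a time, and for each pigeon independently tosses a coin that, with probability $1/4$, triggers an assignment to a currently free hole. Writing $m$ for the number of pigeons, let $B \sim \mathrm{Bin}(m,1/4)$ count the pigeons whose coin comes up heads; its mean is $m/4$, which for the $\PHP_{n}^{n+1}$ instance is close to $\mu := n/4$. Let $X = \left|\mathcal{R}\right|$ denote the number of pigeons actually assigned, equivalently the number of occupied holes.

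First I would dispose of the only genuine subtlety, namely that the coin tosses do not quite produce independent assignments: a pigeon whose coin is heads can nevertheless fail to be placed if every hole is already occupied. This can only happen once $n$ pigeons have been placed, and since both thresholds in the statement, $n/8$ and $3n/8$, lie strictly below $n$, no hole is ever exhausted on any sample path relevant to either tail. Concretely $X = \min(B,n)$, so the events $\{X < n/8\}$ and $\{X > 3n/8\}$ coincide exactly with $\{B < n/8\}$ and $\{B > 3n/8\}$. It therefore suffices to estimate the tails of the honest binomial $B$.

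It then remains to invoke the Chernoff bound with deviation parameter $\delta = 1/2$ about the mean $\mu = n/4$. The lower-tail form $\Pr[B < (1-\delta)\mu] \leq \exp(-\delta^2\mu/2)$ gives $\Pr[B < n/8] \leq e^{-n/32}$, which is the first bound, while the upper-tail form $\Pr[B > (1+\delta)\mu] \leq \exp(-\delta^2\mu/3)$ gives $\Pr[B > 3n/8] \leq e^{-n/48}$, the second; the two exponents differ solely because of the $2$ versus $3$ in the denominators of the lower- and upper-tail forms of the bound. I expect no real obstacle here beyond the bookkeeping of the hole-exhaustion dependence described above (together with keeping track of whether one counts $n$ or $n+1$ pigeons, which perturbs the exponent only by an additive constant and so is immaterial to the stated estimates); the remainder is a textbook Chernoff computation.
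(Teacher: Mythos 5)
Your proposal is correct and matches the paper's approach: the paper simply states that the fact follows from ``a simple application of a Chernoff bound,'' and your computation (binomial with mean $n/4$, multiplicative Chernoff with $\delta = 1/2$, using the $\delta^2\mu/2$ lower-tail and $\delta^2\mu/3$ upper-tail forms) reproduces the stated exponents exactly. Your extra remark disposing of the hole-exhaustion dependence is a sensible piece of bookkeeping that the paper leaves implicit.
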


So, from now on, we assume that $\frac{n}{8}\leq\left|\mathcal{R}\right|\leq\frac{3n}{8}$.
We first prove that a given wide term, i.e. a term that mentions a
constant fraction of the pigeons, survives the random restrictions
with exponentially small probability.
\begin{lem}
\label{lem:restriction-kill}Let $T$ be a term that mentions at least
$\frac{n}{2}$ pigeons. The probability that $T$ does not evaluate
to zero under the random restrictions is at most $\left(\frac{5}{6}\right)^{\nicefrac{n}{16}}$.
\end{lem}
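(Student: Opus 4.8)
The plan is to treat the $\ge n/2$ pigeons that $T$ mentions as near-independent opportunities to falsify $T$, each succeeding with constant probability, so that $T$ survives with probability exponentially small in $n$; I first reduce this to a condition on individual pigeons. Writing $T = \prod P_{i_j,k_j}^{b_j}$ and $M$ for the set of mentioned pigeons, the restriction falsifies $T$ as soon as it assigns some $i \in M$ to a hole $a$ whose binary expansion disagrees with $T$ in one of the bit positions $k$ that $T$ constrains for $i$. Since $T$ mentions every $i\in M$ through at least one literal, it suffices to fix for each $i\in M$ a single constrained position $k_i$ with demanded value $b_i$ and to show that, with constant probability, $i$ is sent to a hole inconsistent with $P_{i,k_i}=b_i$; further constrained bits only enlarge the set of inconsistent holes, so one bit per pigeon is the worst case.

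Next I would establish the per-pigeon estimate inside the good-restriction regime $|\mathcal{R}|\le 3n/8$ under which we are working. Exactly $n/2$ of the $n$ holes are inconsistent with $P_{i,k_i}=b_i$, and correspondingly at most $n/2$ are consistent. When $i$ is placed, fewer than $3n/8$ holes have been used, so at least $n-3n/8 = 5n/8$ holes remain available; hence, conditioned on $i$ being assigned (probability $1/4$), the chance it lands on a consistent hole is at most $(n/2)/(5n/8) = 4/5$. Therefore each $i\in M$ fails to falsify $T$ with probability at most $3/4 + (1/4)(4/5) = 19/20$, uniformly over the past of the process.

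To assemble these into the global bound I would expose the randomness in a convenient order: first reveal the set $A$ of assigned pigeons (independent Bernoulli$(1/4)$ choices, so that the event $|\mathcal{R}|\le 3n/8$ becomes a condition on $|A|$ alone), then reveal the uniform random injection of $A$ into the holes. Given $A$, processing the pigeons of $A\cap M$ in the order of that injection makes the estimate of the previous paragraph valid conditionally on everything already exposed, giving survival probability at most $(4/5)^{|A\cap M|}$; averaging over $A$ yields $(19/20)^{|M|}\le(19/20)^{n/2}$ up to the harmless factor $1/\Pr[\text{good}]=1+o(1)$ supplied by Fact~\ref{fact:good-restrictions}. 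A routine comparison of constants then shows $(19/20)^{n/2}\le\left(\frac{5}{6}\right)^{n/16}$, which is the claimed bound; the same exponent $n/16$ arises if one instead first argues by a Chernoff bound that at least about $n/16$ pigeons of $M$ are actually assigned and multiplies the per-assigned-pigeon estimates.

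The step I expect to be the main obstacle is precisely this last assembly: the set of available holes shrinks as pigeons are placed, and we have conditioned on the global event that the restriction is good, so the naive ``independent, constant kill probability per pigeon'' picture is not literally true. Choosing to expose the assigned set before the hole-injection, and using $|\mathcal{R}|\le 3n/8$ to keep at least $5n/8$ holes available at every step, is what makes each per-pigeon factor genuinely bounded by $19/20$ conditionally on the past, and what lets the conditioning on the good event be absorbed into a $1+o(1)$ factor.
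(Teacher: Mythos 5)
Your proposal is correct and follows essentially the same route as the paper: the key estimate in both is that when a pigeon is assigned at least $5n/8$ holes remain (using $|\mathcal{R}|\leq 3n/8$) while at most $n/2$ of them are consistent with the constrained bit, giving a kill probability of at least $1/5$ per assigned mentioned pigeon. The only difference is in the assembly, where the paper uses the Chernoff-bound variant you mention in passing (at least $n/16$ mentioned pigeons are assigned with probability $1-e^{-n/64}$, then $(4/5)^{n/16}$), whereas you average $(4/5)^{|A\cap M|}$ directly over the assigned set; both yield the stated $(5/6)^{n/16}$ bound.
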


\begin{proof}
An application of a Chernoff bound gives the probability that fewer
than $\frac{n}{16}$ pigeons mentioned by $T$ are \textcolor{black}{assigned} by $\mathcal{R}$
is at most $e^{-\nicefrac{n}{64}}$. For each of these pigeons the
probability that a single bit-variable in $T$ belonging to the pigeon
is set by $\mathcal{R}$ to zero is at least $\frac{1}{5}$. This
is because when $\mathcal{R}$ \textcolor{black}{sets} the pigeon, and thus the bit-variable,
there were at least $\frac{5n}{8}$ holes available, while at most
$\frac{n}{2}$ choices set the bit-variable to one. Thus $T$ survives
under $\mathcal{R}$ with probability at most
$e^{-\nicefrac{n}{64}}+\left(\frac{4}{5}\right)^{\nicefrac{n}{16}} < \left(\frac{5}{6}\right)^{\nicefrac{n}{16}}$
\end{proof}
Finally, we can prove that
\begin{thm}
\label{thm:size-bound}Any \SA\ refutation of the  binary $\PHP_{n}^{n+1}$
has to contain at least $\left(\frac{6}{5}\right)^{\nicefrac{n}{16}} - 1$
terms.
\end{thm}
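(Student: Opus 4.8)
The plan is to combine the rank lower bound of Lemma~\ref{lem:degree-bound} with the restriction-killing estimate of Lemma~\ref{lem:restriction-kill} via a standard counting argument. Suppose for contradiction that there is an \SA\ refutation of binary $\PHP_n^{n+1}$ using fewer than $\left(\frac{6}{5}\right)^{\nicefrac{n}{16}}-1$ terms. A refutation is a set of inequalities derived from the lifted axioms, and each such inequality mentions a finite collection of terms; the quantity to control is the number of distinct \emph{wide} terms appearing across the whole refutation, where by wide I mean a term mentioning at least $\frac{n}{2}$ pigeons. The strategy is to show that with positive probability a single random restriction $\mathcal{R}$ simultaneously (i) is of good size, so that Lemma~\ref{lem:degree-bound} applies to the restricted instance, and (ii) kills every wide term in the refutation, so that the surviving refutation contains only narrow terms, contradicting the rank bound.

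First I would set up the union bound. By Fact~\ref{fact:good-restrictions}, the probability that $\mathcal{R}$ fails to satisfy $\frac{n}{8}\le|\mathcal{R}|\le\frac{3n}{8}$ is at most $e^{-\nicefrac{n}{32}}+e^{-\nicefrac{n}{48}}$, which is exponentially small. By Lemma~\ref{lem:restriction-kill}, each individual wide term survives $\mathcal{R}$ with probability at most $\left(\frac{5}{6}\right)^{\nicefrac{n}{16}}$. If the refutation has strictly fewer than $\left(\frac{6}{5}\right)^{\nicefrac{n}{16}}-1$ terms, then by a union bound the probability that \emph{some} wide term survives is strictly less than $\left(\left(\frac{6}{5}\right)^{\nicefrac{n}{16}}-1\right)\cdot\left(\frac{5}{6}\right)^{\nicefrac{n}{16}}<1-\left(\frac{5}{6}\right)^{\nicefrac{n}{16}}$. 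Adding the failure probability for the size condition, the total probability of a bad event stays strictly below $1$, so there exists a restriction $\mathcal{R}$ that is of good size and annihilates every wide term.

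Next I would argue that applying such a restriction to the refutation yields an \SA\ refutation of a restricted binary $\PHP$ in which no term mentions $\frac{n}{2}$ pigeons. Under $\mathcal{R}$, each assigned pigeon is matched to a distinct available hole, so the restricted formula is precisely binary $\PHP^{m'}_{|H|}$ for the set $H\subseteq[n]$ of holes still available, where $|H|=n-|\mathcal{R}|\ge n-\frac{3n}{8}=\frac{5n}{8}$. Since restrictions map axioms to axioms (or to trivially-true constraints) and preserve the validity of the lifted inequalities, the restricted collection is still a valid \SA\ refutation. But every wide term has been set to zero, so every term surviving in the restricted refutation mentions fewer than $\frac{n}{2}\le|H|$ pigeons. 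This contradicts Lemma~\ref{lem:degree-bound}, which forces any refutation of binary $\PHP^{m'}_{|H|}$ to contain a term mentioning at least $|H|$ pigeons.

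The main obstacle I expect is the bookkeeping in the restriction step rather than the counting: one must verify carefully that a partial-matching restriction sends the lifted \SA\ axioms of the original instance to (lifts of) the axioms of the smaller binary $\PHP^{m'}_{|H|}$, so that the restricted inequalities genuinely constitute an \SA\ refutation of the smaller instance to which Lemma~\ref{lem:degree-bound} applies. I would also double-check the arithmetic closing the union bound — in particular that the chosen threshold $\left(\frac{6}{5}\right)^{\nicefrac{n}{16}}-1$ leaves enough slack to absorb the two small Chernoff failure probabilities from Fact~\ref{fact:good-restrictions}; since those are doubly exponentially small compared to the gap, this should go through, but it is the point where the exact constants must be reconciled.
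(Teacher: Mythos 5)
Your proposal follows essentially the same route as the paper's proof: a union bound combining Fact~\ref{fact:good-restrictions} with Lemma~\ref{lem:restriction-kill} to find a single good restriction killing all terms that mention at least $\frac{n}{2}$ pigeons, yielding a restricted refutation of binary $\PHP^{\nicefrac{5n}{8}+1}_{\nicefrac{5n}{8}}$ that contradicts Lemma~\ref{lem:degree-bound}. The argument is correct, and your closing arithmetic check (that $e^{-\nicefrac{n}{32}}+e^{-\nicefrac{n}{48}}$ is dominated by $\left(\frac{5}{6}\right)^{\nicefrac{n}{16}}$) is exactly the slack the paper implicitly relies on.
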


\begin{proof}
Assume for a contradiction, that there is a smaller refutation. \textcolor{black}{Apply} the random restriction above to get a possibly smaller refutation
of the  binary $\text{PHP}_{\nicefrac{5n}{8}}^{5\nicefrac{n}{8}+1}$ w.h.p.
In particular, that refutation has fewer than $\left(\frac{6}{5}\right)^{\nicefrac{n}{16}}$
terms of rank at least $\frac{n}{2}$. Then by the union-bound, that is by adding together
the probabilities from Fact \ref{fact:good-restrictions} that the restriction is bad to 
the probability from Lemma \ref{lem:restriction-kill} multiplied by the number of terms, we get
a total smaller than one.
This implies that there is a specific restriction that is good and leaves no terms of rank at least $\frac{n}{2}$
in an \SA refutation of the  binary $\text{PHP}_{\nicefrac{5n}{8}}^{5\nicefrac{n}{8}+1}$.
However, this contradicts Lemma \ref{lem:degree-bound}.
\end{proof}
We now consider the so-called weak  binary PHP, $\text{PHP}_{n}^{m}$, where
$m$ is potentially much larger than $n$. The weak unary $\text{PHP}_{n}^{m}$ is interesting because it admits (significantly) subexponential-in-$n$ refutations in Resolution when $m$ is sufficiently large \cite{BussP97}. It follows that this size upper bound is mirrored in \SA. However, as proved in \cite{DantchevGalesiMartin}, the weak binary $\text{PHP}_{n}^{m}$ remains almost-exponential-in-$n$ for minimal refutations in Resolution. We will see here that the weak binary $\text{PHP}_{n}^{m}$ remains almost-exponential-in-$n$ for minimally sized refutations in \SA. In this weak binary case, the random restrictions
$\mathcal{R}$ above do not work, so we apply quite different restrictions
$\mathcal{R}'$ that are as follows: for each pigeon select independently
a single bit uniformly at random and set it to $0$ or $1$ with probability
of $\nicefrac{1}{2}$ each.

We can easily prove the following
\begin{lem}
\label{lem:restriction-kill-weak}A term \textcolor{black}{$T$} that mentions $n'$
pigeons does not evaluate to zero under $\mathcal{R}'$ with probability
at most $e^{-\nicefrac{n'}{2\log n}}$.
\end{lem}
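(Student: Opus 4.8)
The plan is to exploit the fact that $\mathcal{R}'$ acts \emph{independently} on each pigeon, so that the survival probability of $T$ factorises over the pigeons it mentions. First I would fix notation in line with the earlier lemmas: write $T=\prod P_{i_j,k_j}^{b_j}$, where the first index is the pigeon and the second is the bit-position. For each pigeon $i$ mentioned by $T$, let $B_i\subseteq[\log n]$ be the set of bit-positions of pigeon $i$ that actually occur in $T$; since $T$ mentions $i$, we have $|B_i|\geq 1$. Recall that $\mathcal{R}'$ selects, for each pigeon $i$, a single position $k_i\in[\log n]$ uniformly at random and a value $c_i\in\{0,1\}$ uniformly at random, all choices independent across pigeons.

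The heart of the argument is a single-pigeon calculation. The term $T$ dies at pigeon $i$ exactly when the chosen position lands in $B_i$ (probability $|B_i|/\log n$) \emph{and} the chosen value $c_i$ conflicts with the literal that $T$ imposes at that position (probability $\nicefrac{1}{2}$). Hence the probability that $T$ does \emph{not} die at pigeon $i$ equals $1-\frac{|B_i|}{2\log n}\leq 1-\frac{1}{2\log n}$, using $|B_i|\geq 1$. Multiplying these per-pigeon survival probabilities over the $n'$ pigeons mentioned by $T$, by independence, and applying $1-x\leq e^{-x}$, gives
\[
\Pr[T \text{ survives } \mathcal{R}'] \;=\; \prod_{i}\Bigl(1-\tfrac{|B_i|}{2\log n}\Bigr) \;\leq\; \Bigl(1-\tfrac{1}{2\log n}\Bigr)^{n'} \;\leq\; e^{-\nicefrac{n'}{2\log n}},
\]
which is the claimed bound.

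I do not expect any genuine obstacle here: this is an elementary independent-events computation, and the only point requiring care is the observation that a single randomly chosen bit per pigeon already kills the term with probability at least $\frac{1}{2\log n}$ whenever that pigeon is mentioned. The one mild subtlety worth flagging is that when several bits of a given pigeon occur in $T$ the killing probability only increases (the factor becomes $1-\frac{|B_i|}{2\log n}$), so the uniform per-pigeon bound $1-\frac{1}{2\log n}$ is a valid upper bound on survival and the stated estimate is safe. This is exactly the ingredient needed to feed the subsequent weak-case size argument, playing the same role that Lemma~\ref{lem:restriction-kill} played for the non-weak restrictions $\mathcal{R}$.
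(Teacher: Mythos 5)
Your proof is correct and follows essentially the same argument as the paper: per-pigeon, the chosen bit hits a position mentioned by $T$ with probability at least $\nicefrac{1}{\log n}$ and then conflicts with probability $\nicefrac{1}{2}$, and independence across pigeons gives $\left(1-\frac{1}{2\log n}\right)^{n'}\leq e^{-\nicefrac{n'}{2\log n}}$. Your explicit handling of the case where a pigeon contributes several bit-positions to $T$ (via the sets $B_i$) is a minor refinement the paper glosses over, but it changes nothing essential.
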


\begin{proof}
For each pigeon mentioned, the probability that the bit-variable present
in $T$ is \textcolor{black}{set} by the random restriction is $\frac{1}{\log n}$, and
if so, the probability that the bit-variable evaluates to zero is
$\frac{1}{2}$. Since this happens independently for all $n'$ mentioned
pigeons, the probability that they all survive is at most $\left(1-\frac{1}{2\log n}\right)^{n'}$
\end{proof}
Now, we only need to prove that in the restricted version of the pigeon-hole
principle, there is always a big enough term.
\begin{lem}
\label{lem:degree-bound-weak}The probability that an \SA\ refutation
of the  binary $\PHP_{n}^{m}$, \textcolor{black}{for $m > n$}, after $\mathcal{R}'$ does not contain
a term \textcolor{black}{mentioning $\frac{n}{2\log n}$ pigeons} is at most $e^{-\nicefrac{n}{32\log^{2}n}}$.
\end{lem}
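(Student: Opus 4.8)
The plan is to mirror the strong case: certify that the lifted polytope of the restricted instance is nonempty up to rank $\tfrac{n}{2\log n}-1$ (so that any refutation must use a term of larger rank), and to show that this certificate exists outside an event of probability $e^{-n/(32\log^2 n)}$. First I would note that $\mathcal{R}'$ deletes no pigeon: setting a single bit $P_{i,j_i}$ to $b_i$ merely confines pigeon $i$ to the half $H_i:=\{a:\mathrm{bin}(a)_{j_i}=b_i\}$ of $\tfrac{n}{2}$ holes. Grouping the pigeons by their \emph{type} $(j_i,b_i)\in[\log n]\times\{0,1\}$ gives $2\log n$ classes, and all pigeons of one type share the same half, so that type together with its half forms a complete bipartite block on which the incidence is totally symmetric.

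Next I would isolate one large symmetric block by a counting/Chernoff step. A fixed type receives each pigeon independently with probability $\tfrac{1}{2\log n}$, so its expected size is $\tfrac{m}{2\log n}\ge\tfrac{n}{2\log n}$; a lower-tail Chernoff estimate then bounds by $e^{-n/(32\log^2 n)}$ the probability that this size falls below $\tfrac{n}{2\log n}$. This is precisely the claimed failure probability, and off this bad event we obtain a set $S^\ast$ of at least $\tfrac{n}{2\log n}$ pigeons all confined to one common half $H^\ast$.

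On $S^\ast$ and $H^\ast$ I would run the valuation of Lemma~\ref{lem:degree-bound} verbatim: since every pigeon of $S^\ast$ sees exactly the holes of $H^\ast$, the uniform-random-matching value $v(\cdot)$ is well defined and symmetric on all terms mentioning at most $\tfrac{n}{2\log n}$ pigeons of $S^\ast$, and, exactly as there, it satisfies every lift of an equality of negation (a matching is consistent with precisely one of $TP^{b}_{i,k}$, $TP^{1-b}_{i,k}$) and every lift of a disequality clause (an injective matching never collides two pigeons in one hole). The intended conclusion is that $v$ keeps the lifted polytope nonempty at rank $\tfrac{n}{2\log n}-1$, so that any $\SA$ refutation of the restricted $\PHP_n^m$ must contain a term mentioning at least $\tfrac{n}{2\log n}$ pigeons; translating back through $\mathcal{R}'$ would then give the statement.

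The step I expect to be the main obstacle lies in this final implication. The block $(S^\ast,H^\ast)$ has fewer pigeons than holes and is therefore \emph{satisfiable}, so on its own it forces no wide term; the wide term is forced only by the global unsatisfiability ($m>n$). Bridging this gap -- either by collecting enough pigeons in a single block to make it a genuine contradiction, or by extending the symmetric matching certificate on $S^\ast$ to a globally consistent valuation for all $m$ pigeons despite the overlapping halves attached to distinct types -- is the delicate point, as is fixing the Chernoff constant to the stated $\tfrac{1}{32}$. The verification of the individual lifted equalities and inequalities is, by contrast, routine given Lemma~\ref{lem:degree-bound}.
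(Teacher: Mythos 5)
Your overall strategy (apply $\mathcal{R}'$, then exhibit a partial symmetric-matching valuation that satisfies every lift involving only narrow terms, so that any refutation must contain a wide term) is the paper's, but your central construction --- a single type-block $(S^\ast,H^\ast)$ --- does not suffice, and the obstacle you flag in your last paragraph is precisely where the argument has to be different. A valuation certifying that no refutation can consist solely of terms mentioning few pigeons must assign a value to \emph{every} such term, and a narrow term may mention pigeons of many different types, each confined to a different (overlapping) half of the holes; a certificate defined only on terms over $S^\ast$ says nothing about a refutation that never mentions a pigeon of $S^\ast$. The paper's fix is to keep \emph{all} $2\log n$ types in play: by Chernoff plus a union bound over the $2\log n$ type/bit-value pairs, with probability at least $1-(2\log n)e^{-m/(16\log n)}$ every type retains at least $\frac{m}{4\log n}$ pigeons, so one can fix a set $P$ containing exactly $\frac{n}{4\log n}$ pigeons of \emph{each} type (hence $|P|=\frac{n}{2}$). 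Any term mentioning at most $\frac{n}{4\log n}$ pigeons then embeds type-preservingly into $P$, and the valuation is the probability that a uniformly random perfect matching of $P$ into a suitable set of $\frac{n}{2}$ holes, respecting every pigeon's half, is consistent with the embedded term; the existence of such matchings despite the overlapping halves is exactly the max-flow--min-cut (Hall) step you were missing. Your worry that the block is ``satisfiable, so forces no wide term'' is a red herring: the paper's $P$ is satisfiable too. The lower bound comes from the \emph{partiality} of the valuation (it is undefined on terms that do not embed into $P$), not from any local unsatisfiability.

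Two quantitative points also need repair. Your Chernoff step is miscalibrated: a fixed type has expected size $\frac{m}{2\log n}$, and for $m$ close to $n$ this equals your target $\frac{n}{2\log n}$, so the probability of falling below the target is roughly a half rather than $e^{-n/(32\log^{2}n)}$; one must aim at half the mean, $\frac{m}{4\log n}$, as the paper does. Moreover the estimate must hold simultaneously for all $2\log n$ types (so that every narrow term embeds), which is why the paper's failure probability carries the extra union-bound factor before being absorbed into the stated $e^{-n/(32\log^{2}n)}$.
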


\begin{proof}
We first apply a Chernoff bound to deduce that for each bit position
$k$, \textcolor{black}{$1\leq k\leq (\log n)$} and a value $b$, $0$ or $1$, the probability
that there are fewer than $\frac{m}{4\log n}$ pigeons for which the
$k$th bit is set to $b$ is at most $e^{-\nicefrac{m}{16\log n}}$.
By the union bound, the probability that this holds for some position
$k$ and some value $b$ is at most $(2\log n) e^{-\nicefrac{m}{16\log n}}$.
Thus, with probability exponentially close to one $\mathcal{R}'$
leaves at least $\frac{m}{4\log n}$  pigeons of each type $\left(k,b\right)$,
\mbox{i.e.} the $k$th bit of the pigeon is set to $b$. \textcolor{black}{Recalling $m\geq n$}, we now pick a set
of pigeons $P$ that has \textcolor{black}{$(*)$} precisely \textcolor{black}{$\frac{n}{4\log n}$} pigeons of
each type (and thus is of size \textcolor{black}{$\nicefrac{n}{2}$}).

We evaluate any term $T$ that mentions at most \textcolor{black}{$\frac{n}{4\log n}$}
pigeons by first embedding  this set of pigeons into $P$, \textcolor{black}{which we can do due to property $(*)$}, and then giving it a value
as before. That is, by taking the probability that a perfect matching
between $P$ and \textcolor{black}{some chosen set of $\nicefrac{n}{2}$ holes consistent with the random restriction}, is consistent
with $T$.
\textcolor{black}{
To finish the proof, we need to show that such a set of $\nicefrac{n}{2}$ exists. This follows from the max--flow min--cut theorem which shows that a matching that contains every pigeon from $P$ into the set of holes exists.}
\end{proof}
We now proceed as in the proof of Theorem \ref{thm:size-bound} to
deduce that any SA refutation of the  binary $\text{PHP}_{n}^{m}$
must have size exponential in $n$.
\begin{cor}
Any \SA\ refutation of the  binary $\PHP_{n}^{m}$, $m>n$, has to
contain at least $e^{\nicefrac{n}{32\log^{2}n}}$ terms.
\end{cor}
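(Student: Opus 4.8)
The plan is to combine the two probabilistic lemmas already established for the weak binary case with the union-bound argument that was used in Theorem \ref{thm:size-bound}, exactly as the paragraph preceding the corollary instructs. Suppose for contradiction that there is an \SA\ refutation of binary $\PHP_n^m$ (with $m>n$) containing fewer than $e^{\nicefrac{n}{32\log^2 n}}$ terms. First I would apply the restriction $\mathcal{R}'$, under which each surviving line of the refutation is still a valid \SA\ refutation of the restricted principle. The goal is to show that, with positive probability, $\mathcal{R}'$ simultaneously kills every wide term (those mentioning at least $\frac{n}{2\log n}$ pigeons) while leaving behind a genuine refutation of a sub-pigeonhole-principle of the type handled by Lemma \ref{lem:degree-bound}; these two facts are contradictory, since that lemma forces some surviving term to mention at least $\frac{n}{2\log n}$ pigeons.

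The quantitative core is a single union bound over two bad events. By Lemma \ref{lem:restriction-kill-weak}, a fixed term mentioning at least $\frac{n}{2\log n}$ pigeons fails to be zeroed out with probability at most $e^{-\nicefrac{n}{4\log^2 n}}$; multiplying by the assumed number of terms, strictly less than $e^{\nicefrac{n}{32\log^2 n}}$, the probability that \emph{some} wide term survives is strictly less than $e^{\nicefrac{n}{32\log^2 n}} \cdot e^{-\nicefrac{n}{4\log^2 n}}$, which tends to $0$. Separately, by Lemma \ref{lem:degree-bound-weak}, the probability that the restricted refutation fails to contain a term mentioning $\frac{n}{2\log n}$ pigeons is at most $e^{-\nicefrac{n}{32\log^2 n}}$. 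I would choose the threshold so that the sum of these two probabilities is strictly below $1$, which then guarantees a \emph{specific} restriction witnessing both desired properties at once.

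That specific restriction yields a restricted \SA\ refutation in which no term mentions as many as $\frac{n}{2\log n}$ pigeons, yet in which — by the complementary guarantee of Lemma \ref{lem:degree-bound-weak} — the structure is that of a binary pigeonhole principle with enough holes to invoke Lemma \ref{lem:degree-bound}, whose conclusion is that some term must mention at least that many pigeons. The contradiction closes the argument, so no refutation of subexponential size can exist.

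I expect the main obstacle to be purely bookkeeping rather than conceptual: verifying that after $\mathcal{R}'$ the object we hold really is a legitimate \SA\ refutation of a principle in the scope of Lemma \ref{lem:degree-bound}, and confirming the exact arithmetic of the exponents so that the two failure probabilities sum to less than one with the claimed bound $e^{\nicefrac{n}{32\log^2 n}}$. In particular one must check that restricting a single bit per pigeon interacts correctly with the valuation argument of Lemma \ref{lem:degree-bound-weak} — that the set $P$ and the matching into $\nicefrac{n}{2}$ holes survive the restriction — so that the reduction to the rank lower bound is valid. None of this requires new ideas beyond those in Theorem \ref{thm:size-bound}; it is the weak-parameter analogue of that proof with $\mathcal{R}'$ in place of $\mathcal{R}$.
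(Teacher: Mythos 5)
Your proposal is correct and follows essentially the same route as the paper: assume fewer than $e^{\nicefrac{n}{32\log^{2}n}}$ terms, apply $\mathcal{R}'$, union-bound the survival probability of wide terms from Lemma \ref{lem:restriction-kill-weak} against the failure probability from Lemma \ref{lem:degree-bound-weak}, and extract a specific restriction realising both events to reach a contradiction. The only cosmetic difference is that the contradiction is obtained directly from Lemma \ref{lem:degree-bound-weak} (whose proof builds the valuation for the bit-restricted instance itself) rather than by re-invoking Lemma \ref{lem:degree-bound}, a point your own caveat about $\mathcal{R}'$ already anticipates.
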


\begin{proof}
Assume for a contradiction, that there is a refutation with fewer
terms of rank at most $\frac{n}{2\log n}$. By Lemma \ref{lem:restriction-kill-weak}
and a union-bound, there is a specific restriction that evaluates
all these terms to zero. However, this contradicts Lemma \ref{lem:degree-bound-weak}
.\end{proof}

	\section{The Least Number Principle with equality}
	Recall that the unary \emph{Least Number Principle ($\LNP_n$) with equality} has the following set of SA axioms:
	\begin{gather}
	\textit{self}: P_{i, i} = 0 \label{self} \quad \forall \; i \in n\\
	\textit{trans}: P_{i, k} - P_{i, j} - P_{j, k} + 1 \geq 0 \quad \forall \; i, j, k \in [n] \label{trans} \\
	\textit{impl}: P_{i, j} - S_{i, j} \geq 0 \quad \forall \; i, j \in [n]  \label{impl}\\
	\textit{lower}: \sum_{i \in [n]} S_{i, j} - 1 = 0 \quad \forall \; j \in [n] \label{lower}
	\end{gather}
	
\noindent	Strictly speaking Sherali-Adams is defined for inequalities only. An equality axiom $a = 0$ is simulated by the two inequalities $a \geq 0, -a \geq 0$, which we refer to as the 
	\emph{positive} and \emph{negative} instances of that axiom, respectively.  Also, note that we have used $P_{i,j}+\overline{P}_{i,j}=1$ to derive this formulation. We call 
		two terms \emph{isomorphic} if one term can be gotten from the other by relabelling the indices appearing in the subscripts \textcolor{black}{by a permutation}.
	
	\begin{thm}
		For $n$ large enough, the \SA\ rank of the $\LNP_n$ with equality is at most $2$ and \SA\ size at most polynomial in $n$.
\label{thm:8}
	\end{thm}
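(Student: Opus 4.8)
The plan is to exhibit the refutation as an explicit nonnegative combination of the axioms \eqref{self}--\eqref{lower}, each multiplied by a conjunction of at most two literals, that reduces multilinearly to $-1$; equivalently, to show that $\mathcal{P}_2$ is empty. Throughout I would treat the $Z$-variables as multilinear monomials in the $P$ and $S$ and reduce using $v^2=v$, making free use of the monotonicity and $Z_{v\wedge\neg v}=0$ facts recorded in Section~\ref{sec:pre}. For the size claim I note in advance that it is essentially automatic once the rank is $2$: for $\LNP_n$ there are only $O(n^2)$ variables, hence $O(n^4)$ lifting monomials of degree at most two and $O(n^3)$ axioms, so $\mathcal{P}_2$ has polynomially many defining inequalities and any refuting subset is of polynomial size.

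First I would record the consequences available already at rank $1$. From \eqref{self} and \eqref{impl} one gets $S_{i,i}=0$ and $P_{i,i}=0$. Lifting \eqref{impl} by $S_{i,j}$ and combining with monotonicity gives $P_{i,j}S_{i,j}=S_{i,j}$. Lifting the positive instance of \eqref{lower} for index $j$ by the single literal $S_{k,j}$ and using $S_{k,j}^2=S_{k,j}$ together with nonnegativity of $Z_{S_{i,j}\wedge S_{k,j}}$ yields the at-most-one relation $S_{i,j}S_{k,j}=0$ for $i\neq k$. Most importantly, lifting the transitivity axiom \eqref{trans} for the triple $(\ell,i,j)$ by the literal $S_{i,j}$ and substituting $P_{i,j}S_{i,j}=S_{i,j}$ collapses it to $S_{i,j}P_{\ell,j}\geq S_{i,j}P_{\ell,i}$; that is, whenever $i$ is the witness of $j$, everything counted below $i$ is counted below $j$. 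Finally, summing \eqref{impl} and \eqref{lower} shows that $d_j:=\sum_{\ell}P_{\ell,j}$ satisfies $d_j\geq 1$ for every $j$, and one degree-$2$ lift rules out $2$-cycles, $S_{i,j}S_{j,i}=0$.

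The heart of the argument is to turn these into a global count that closes at degree $3$. Using the collapsed transitivity I would form $\sum_{i,j}S_{i,j}(d_j-d_i)$, split off the diagonal term $\sum_{i,j}S_{i,j}(P_{i,j}-P_{i,i})=\sum_{i,j}S_{i,j}=n$ via \eqref{lower}, and observe that every remaining term is nonnegative, obtaining
\[ \textstyle\sum_{i,j}S_{i,j}(d_j-d_i)\ \geq\ n. \]
The delicate point---and the step I expect to be the main obstacle---is producing a matching upper bound from the same degree-$2$ lifts. A naive potential/telescoping estimate does not close, because \eqref{lower} is an exactly-one constraint only on the \emph{second} index of $S$, so any aggregation over the witness (first) index introduces the uncontrolled row sums $\sum_{j}S_{i,j}$; indeed, iterating ``the potential strictly decreases along a witness edge'' only yields $d_j\geq k$ at rank $k$ and hence a linear rank bound, exactly matching the unary case without equalities. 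The resolution must use the exactly-one equality in an essential, non-telescoping way. To pin down the constant-degree contradiction I would exploit the full $\mathrm{Sym}(n)$-symmetry of the axioms: $\mathcal{P}_2$ is a $\mathrm{Sym}(n)$-invariant polytope, so were it nonempty it would contain its group-average, a point whose coordinate $Z_m$ depends only on the isomorphism type of the monomial $m$ (in the sense of isomorphic terms defined above). Up to degree $3$ there are only constantly many such types, so the lifted axioms, averaged over the group, become a fixed linear program in these orbit-variables with coefficients polynomial in $n$. Showing that this constant-size system is infeasible for large $n$---which is where $\sum_j d_j\geq n$ is forced to collide with the antisymmetry and transitivity relations among the orbit-variables---both proves $\mathcal{P}_2=\emptyset$ and, by reading off the Farkas multipliers, exhibits the explicit rank-$2$ refutation. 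Combined with the polynomial inequality count above, this gives \SA\ size polynomial in $n$.
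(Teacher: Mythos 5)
Your preliminary derivations are sound and match facts the paper also uses ($P_{i,i}=S_{i,i}=0$, $S_{i,j}P_{i,j}=S_{i,j}$, the at-most-one relations, and the collapsed transitivity $S_{i,j}P_{\ell,j}\geq S_{i,j}P_{\ell,i}$), and your observation that naive telescoping only recovers the linear-rank behaviour of the inequality version is a genuinely good diagnosis of where the difficulty lies. The size claim is also handled correctly. But the proof has a real gap exactly where you flag it: after establishing $\sum_{i,j}S_{i,j}(d_j-d_i)\geq n$ you never produce the countervailing bound, and instead defer the entire contradiction to ``the constant-size orbit LP is infeasible for large $n$.'' That infeasibility \emph{is} the theorem. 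Symmetrization (averaging over $\mathrm{Sym}(n)$ to get a point where isomorphic terms agree) is indeed the right first move --- the paper opens with it --- but it only reduces the problem to a finite system whose coefficients still depend on $n$; one must then actually exhibit the Farkas certificate, and nothing in your sketch indicates how the exactly-one constraint forces the collision you allude to.

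The missing engine in the paper's argument is the following use of symmetry \emph{together with} the equality: lifting \textit{lower} by $P_{i,j}$ gives $S_{i,j}P_{i,j}+\sum_{k\neq i,j}S_{k,j}P_{i,j}=P_{i,j}$, and at a symmetric point the $n-2$ off-witness terms are all equal; a further lift by $S_{k,j}$ combined with $S_{k,j}S_{i,j}P_{i,j}=0$ (itself a consequence of the negative \textit{lower} instance plus monotonicity) yields $(n-2)\,S_{k,j}P_{i,j}=S_{k,j}P_{i,j}$, hence $S_{k,j}P_{i,j}=0$ for $k\neq i$. Together with $S_{i,j}P_{i,j}=S_{i,j}$ and the equality axiom this gives the exact identity $\sum_m P_{m,j}=1$ --- the upper bound on $d_j$ that your telescoping could not reach. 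From there a parallel manipulation with lifted \textit{trans} gives $P_{i,k}P_{j,k}=0$ for $i\neq j$ and then $P_{j,x}=0$ for all $j,x$, contradicting $\sum_m P_{m,j}=1$. Note that the coefficient $n-2$ appearing with multiplicity is precisely how ``large $n$'' enters, and that the argument never needs your global sum $\sum_{i,j}S_{i,j}(d_j-d_i)$; the contradiction is local to a single column $j$. Without some version of this $(n-2)$-multiplicity step, or an explicit verification of your orbit LP, the proposal does not establish the rank-$2$ bound.
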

	
	\begin{proof}
	Note that if the polytope $\mathcal{P}^{\LNP_n}_2$ is nonempty there must exist a point where any isomorphic variables are given the same value. We can find  such a point by averaging an asymmetric valuation over all permutations of $[n]$.\\
	So suppose towards a contradiction there is such a symmetric point. \textcolor{black}{First note $P_{i,i}=S_{i,i}=0$ by \textit{self} and \textit{impl}.}
	We start by lifting the $j$th instance of \textit{lower} by $P_{i, j}$ to get
	\[
		S_{i, j}P_{i, j} + \sum_{k \neq i, j}  S_{k, j} P_{i, j} = P_{i, j}
	\]
	Equating (by symmetry) the terms $S_{k, j} P_{i, j}$ this is actually
	\[
	S_{i, j}P_{i, j} + (n-2)  S_{k, j} P_{i, j} = P_{i, j}.
	\]
	Lift this by $S_{k, j}$ to get
	\[
	S_{k, j} S_{i, j}P_{i, j} + (n-2)  S_{k, j} P_{i, j} = S_{k, j} P_{i, j}.
	\]
	We can delete the leftmost term by proving it must be $0$. Let us take an instance of \textit{lower} lifted by $S_{k, j}P_{i, j}$ for any $k \neq i,j$ along with an instance of monotonicity $S_{k, j}S_{m, j} P_{i, j} \geq 0$ for every $m \neq j,k$:
\textcolor{black}{
	\begin{gather}
	S_{k, j}P_{i, j} \left(1 - \sum_{m \neq j} S_{m, j}\right)  + \sum_{m \neq j,k,i} S_{k, j}S_{m, j} P_{i, j} \nonumber\\
	= - \sum_{m \neq k, j} S_{k, j}S_{m, j} P_{i, j} + \sum_{m \neq j,k,i} S_{k, j}S_{m, j} P_{i, j} \nonumber\\
	= - S_{k, j}S_{i, j} P_{i, j} \label{multwitnesses}. 
	\end{gather}
	The left hand side of this equation is greater than $0$ so we can deduce $S_{k, j}S_{i, j} P_{i, j}=0$.}
	
	This results in 
	\[
	(n-2) S_{k, j} P_{i, j} = S_{k, j} P_{i, j} \quad \text{\textcolor{black}{which is}} \quad S_{k, j} P_{i, j} = 0.
	\]
	We lift \textit{impl} by $S_{i, j}$ to obtain $S_{i, j} \leq S_{i, j}P_{i, j}$. Monotonicity gives us the
	opposite inequality and we can proceed as if we had the equality $S_{k, j} P_{k, j} = S_{k, j}$.\\
	So repeating the derivation of $S_{k, j} P_{i, j} = 0$ for every $i \neq k$ and then adding  $S_{k, j} P_{k, j} = S_{k, j}$ gets us $\sum_m S_{k, j} P_{m, j} = S_{k, j}$. Repeating this again for every $k$ and summing up gives
	\[
	 0 = \sum_{k,m} S_{k, j} P_{m, j} - \sum_k S_{k, j} = \sum_{k,m}  S_{k, j} P_{m, j} - 1
	\]
	with the last equality coming from the addition of the positive \textit{lower} instance $\sum_{k} S_{k, j} - 1 = 0$. Finally adding the lifted \textit{lower} instance $P_{m, j} - \sum_k S_{k, j} P_{m, j}\textcolor{black}{=0}$ for every $m$ gives
	\begin{equation} \label{ordersum}
	\sum_m P_{m, j} = 1.
	\end{equation}
	By lifting the \textit{trans} axiom $P_{i, k} - P_{i, j} - P_{j, k} + 1 \geq 0$ by $P_{j, k}$ we get
	\begin{equation} \label{zeros}
	P_{i, k}P_{j, k} - P_{i, j} P_{j, k}  \geq 0 
	\end{equation}
\textcolor{black}{
	Now, due to a manipulation similar to \Cref{multwitnesses} using \Cref{ordersum}
	\begin{gather}
	P_{k, j}P_{i, j} \left(1 - \sum_{m \neq j} P_{m, j}\right)  + \sum_{m \neq j,k,i} P_{k, j}P_{m, j} P_{i, j} \nonumber\\
	= - \sum_{m \neq k, j} P_{k, j}P_{m, j} P_{i, j} + \sum_{m \neq j,k,i} P_{k, j}P_{m, j} P_{i, j} \nonumber\\
	= - P_{k, j}P_{i, j} P_{i, j}\\
	= - P_{k, j}P_{i, j} \label{multwitnesses2}. 
	\end{gather}
	}

\noindent Thus,	$P_{i, k}P_{j, k}$ must be zero whenever $i \neq j$. Along with \Cref{zeros} we derive $P_{i, j} P_{j, k} = 0$.
	 \textcolor{black}{Noting $P_{i, j}P_{j, i}=0$ follows from \textit{trans} and \textit{self}, we lift} 
	\Cref{ordersum} by $P_{j, x}$ for some $x$ \textcolor{black}{to} get
	\[
	 P_{j, x}\sum_m P_{m, j} = \sum_{m \neq x, j} P_{m, j}P_{j, x} = P_{j, x}
	\]
	where we know the left hand side is zero (\Cref{zeros}). Thus we can derive $P_{i, j} = 0$ for any $i$ and $j$, resulting in a contradiction when combined with \Cref{ordersum}.
	\end{proof}

	\begin{cor}
		The binary encoding of $\LNP_n$ has \SA\ rank at most $2\log n$ and \SA\ size at most polynomial in $n$.
			\label{cor:9}
	\end{cor}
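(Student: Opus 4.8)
The plan is to derive this as an immediate consequence of the conversion Lemma~\ref{lem:binconv} together with the rank and size bound for the unary-with-equalities encoding proved in Theorem~\ref{thm:8}. The first thing to check is that $\LNP_n$ qualifies as an admissible combinatorial principle for Lemma~\ref{lem:binconv}: its negation is a first-order $\Pi_2$-sentence with no finite models, as exhibited in the preliminaries by $\forall x,y,z\,\exists w\ \neg R(x,x)\wedge(R(x,y)\wedge R(y,z)\rightarrow R(x,z))\wedge R(x,w)$. Since this is exactly the shape of principle the lemma addresses, its hypothesis is met and I am free to plug in the unary-with-equalities bounds.

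Next I would read off from Theorem~\ref{thm:8} the two parameters required by the lemma, taking $r=2$ (the \SA\ rank of the unary $\LNP_n$ with equality) and $s$ polynomial in $n$ (its \SA\ size). Applying Lemma~\ref{lem:binconv} verbatim then converts the unary-with-equalities refutation of rank $r$ and size $s$ into an \SA\ refutation of the binary encoding of rank at most $r\log n$ and size at most $s$. Substituting $r=2$ gives rank at most $2\log n$, and since $s$ is polynomial the size remains polynomial in $n$, which is precisely the assertion of the corollary.

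I expect no genuine difficulty here beyond confirming that the machinery of Lemma~\ref{lem:binconv} applies without modification. The only feature worth flagging is the asymmetry in how the two measures behave under the lemma's substitution of each unary Skolem term $S_{x,a}$ by the monomial $S^{a_1}_{x,1}\cdots S^{a_{\log n}}_{x,\log n}$: the rank is multiplied by $\log n$ while the size is preserved. Thus a constant-rank, polynomial-size refutation becomes a refutation of logarithmic rank and still polynomial size, and the corollary follows simply by composing the two earlier results.
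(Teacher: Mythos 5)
Your proposal is correct and matches the paper's own argument, which simply cites Lemma~\ref{lem:binconv} as giving the corollary immediately from Theorem~\ref{thm:8} with $r=2$ and $s$ polynomial. Your additional checks (that $\LNP_n$ is a $\Pi_2$ principle with no finite models, and that the substitution multiplies rank by $\log n$ while preserving size) are exactly the right sanity checks and nothing more is needed.
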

	\begin{proof}
		Immediate from \cref{lem:binconv}.
	\end{proof}

\section{\SA+Squares}
\label{sec:SA+AS}
In this section we consider a proof system, \SA+Squares, based on inequalities of \textcolor{black}{multilinear} polynomials. We now consider axioms as degree-1 polynomials in some set of variables and refutations as polynomials in those same variables. Then this system is gotten from \SA\ by allowing addition of (linearised) squares of polynomials. In terms of strength this system will be strictly stronger than \SA\ and at most as strong as Lasserre (also known as Sum-of-Squares), although we do not at this point see \textcolor{black}{an exponential} separation between \SA+Squares and Lasserre.
See \cite{Lasserre2001, laurent01comparison, SoS-survey} for more on the Lasserre proof system \textcolor{black}{and \cite{Lauria2017} for tight degree lower bound results}.

Consider the polynomial $S_{i,j} P_{i,j} - S_{i,j} P_{i,k}$. The square of this is 
\[
S_{i,j} P_{i,j}S_{i,j} P_{i,j} + S_{i,j} P_{i,k} S_{i,j} P_{i,k} - 2 S_{i,j} P_{i,j} S_{i,j} P_{i,k}.
\]
Using idempotence this linearises to $S_{i,j} P_{i,j} + S_{i,j} P_{i,k} - 2 S_{i,j} P_{i,j} P_{i,k}$. Thus we know that this last polynomial is non-negative for all $0/1$ settings of the variables.\\
A \emph{degree-$d$} \SA+Squares refutation of a set of linear inequalities (over terms) $q_1 \geq 0, \ldots, q_x \geq 0$ is an equation of the form
\begin{equation} \label{eq:SA+Sref}
-1 = \sum_{i = 1}^x p_i q_i + \sum_{i = 1}^y r_i^2
\end{equation}
where the $p_i$ are polynomials with nonnegative coefficients and the degree of the polynomials $p_i q_i, r_i^2$ is at most $d$. 
We want to underline that we now consider a term like $S_{i,j} P_{i,j} P_{i,k}$ as a product of its constituent variables. This is opposed to the preceding sections in which we viewed it as a single
variable $Z_{S_{i,j} P_{i,j} P_{i,k}}$.
The translation from the degree discussed here to \SA\ rank previously introduced may be paraphrased by ``$\mathrm{rank}=\mathrm{degree}-1$''.

We show that the unary \PHP\ becomes easy in this stronger
proof system while the \LNP\ remains hard. \textcolor{black}{The following appears as Example 2.1 in \cite{russians} but we reproduce its easy proof for completeness.}
\begin{thm}[\cite{russians}]
	The $\PHP^{n+1}_n$ has an $\SA+\mathrm{Squares}$ refutation of degree 2.
	\label{thm:10}
\end{thm}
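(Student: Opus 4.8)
The plan is to exhibit an explicit degree-2 $\SA+\mathrm{Squares}$ refutation of the unary $\PHP^{n+1}_n$, whose axioms are the pigeon clauses $\sum_{j\in[n]} P_{i,j} - 1 \geq 0$ for each pigeon $i\in[n+1]$ and the hole clauses $1 - P_{i,j} - P_{i',j} \geq 0$ for $i\neq i'$ and each hole $j\in[n]$ (together with the bounding and negation axioms). The key counting identity underlying $\PHP$ is that the total number of pigeon-hole incidences is simultaneously $\geq n+1$ (summing the pigeon axioms) and $\leq n$ (summing the hole axioms, one per hole). The whole difficulty is that in plain $\SA$ this cannot be achieved at low rank because the at-most-one constraint per hole is not available directly; the squares are exactly what supply it cheaply.

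First I would use the squares to derive, for each hole $j$, the at-most-one inequality $\sum_{i} P_{i,j} - \sum_{i\neq i'} P_{i,j}P_{i',j} \leq 1$ essentially for free: the square of the polynomial $\bigl(1 - \sum_i P_{i,j}\bigr)$ linearises (using idempotence $P_{i,j}^2 = P_{i,j}$ as in the worked example preceding the theorem) to a degree-2 nonnegative term that, after expansion, pins down the sum of single variables in terms of the cross terms $P_{i,j}P_{i',j}$. Crucially the cross terms $P_{i,j}P_{i',j}$ for $i\neq i'$ are each bounded using the hole axioms: from $1 - P_{i,j} - P_{i',j} \geq 0$ multiplied by the nonnegative $P_{i,j}$ (a valid $p_i q_i$ product of degree 2) and idempotence we get $P_{i,j}P_{i',j} \leq 0$, hence these products vanish on $0/1$ points and can be eliminated. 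Combining the squared term with these products yields $\sum_i P_{i,j} \leq 1$ at degree 2 for every hole.

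Then I would simply sum these $n$ per-hole bounds to obtain $\sum_{i,j} P_{i,j} \leq n$, and sum the $n+1$ pigeon axioms to obtain $\sum_{i,j} P_{i,j} \geq n+1$. Adding these two (each premultiplied by the constant nonnegative polynomial $1$) produces $n \geq n+1$, i.e. $-1 \geq 0$, which rearranges into the required form $-1 = \sum_i p_i q_i + \sum_i r_i^2$ with all $p_i$ having nonnegative coefficients and every product and square of degree at most $2$. Assembling the pieces into a single equation of the shape of \eqref{eq:SA+Sref} is the bookkeeping step, and I would present the final linear combination explicitly with its coefficients.

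The main obstacle, and the only place requiring genuine care rather than routine algebra, is confirming that the square $\bigl(1-\sum_i P_{i,j}\bigr)^2$ linearises to a polynomial whose non-square part is built precisely from the cross terms that the hole axioms control, so that after substituting the bounds on $P_{i,j}P_{i',j}$ the degree stays at $2$ and no higher-degree monomial is introduced. Concretely, expanding the square gives $1 - 2\sum_i P_{i,j} + \sum_i P_{i,j} + 2\sum_{i<i'} P_{i,j}P_{i',j}$ after idempotence, and I would verify that isolating $\sum_i P_{i,j}$ from this, together with the $n\choose 2$ degree-2 products $P_{i,j}(1 - P_{i,j} - P_{i',j}) \geq 0$, leaves a clean degree-2 certificate. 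Since all intermediate objects are degree $2$ and all multipliers are manifestly nonnegative, the resulting refutation meets the degree-2 bound claimed.
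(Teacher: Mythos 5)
Your proposal is correct and follows essentially the same route as the paper's proof: square and linearise $1-\sum_i P_{i,j}$ per hole, kill the cross terms $P_{i,j}P_{i',j}$ by multiplying the hole axioms by a nonnegative monomial, sum the resulting at-most-one inequalities over holes, and play this against the sum of the pigeon axioms to reach $-1\geq 0$. The only (immaterial) difference is that you multiply the hole axiom by $P_{i,j}$ where the paper lifts it by $P_{i,j}P_{i',j}$; both linearise to $-P_{i,j}P_{i',j}\geq 0$.
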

\begin{proof}
Indeed, for a hole $j$ we square and then linearise the polynomial
\[
1-\sum_{i=1}^{m}P_{i,j}
\]
to get the inequality
\begin{equation}
1-\sum_{i=1}^{m}P_{i,j}+2\sum_{1\leq i<i'\leq m}P_{i,j}P_{i',j}\geq0.\label{eq:hole-squared}
\end{equation}
(We have used the linearisation, $P_{i,j}^{2}=P_{i,j}$.) On the other
hand, by lifting each axiom $\overline{P}_{i,j}+\overline{P}_{i',j}\geq1$
by $P_{i,j}P_{i',j}$ we deduce $0\geq P_{i,j}P_{i',j}$. Multiplying
by $2$ these inequalities for all $i,i'$, $0\leq i<i'\leq m$ and
adding them to (\ref{eq:hole-squared}) gives
\[
1-\sum_{i=1}^{m}P_{i,j}\geq0.
\]
By adding over all holes, we get
\[
n-\sum_{j=1}^{n}\sum_{i=1}^{m}P_{i,j}\geq0.
\]
On the other hand, by adding all pigeon axioms, we get
\[
\sum_{i=1}^{m}\sum_{j=1}^{n}P_{i,j}\geq m.
\]
From the last two inequalities, we get the desired contradiction,
$n - m \geq 0$.
\end{proof}

We give our lower bound by producing a linear function $v$ (which we will call a \emph{valuation}) from terms into $\mathbb{R}$ such that
\begin{enumerate}
 \item for each axiom $p \geq 0$ and every term $X$ with $deg(Xp) \leq d$ we have $v(X p) \geq 0$, and 
 \item we have $v(r^2) \geq 0$ whenever $deg(r^2) \leq d$.
\end{enumerate}
The existence of such a valuation clearly implies that a degree-$d$ \SA+Squares
refutation cannot exist, as it would result in a contradiction when applied to both sides of \cref{eq:SA+Sref}.\\
To verify that $v(r^2) \geq 0$ whenever $deg(r^2) \leq d$ we show that the so-called \emph{moment-matrix} $\mathcal{M}_v$ is positive semidefinite. The degree-$d$ moment matrix is
defined to be the symmetric square matrix whose rows and columns are
indexed by terms of size at most $d/2$ and each entry is the valuation of the product of the two terms indexing that entry.
Given any polynomial $\sigma$ of degree at most $d/2$ let $c$ be its coefficient vector. Then if $\mathcal{M}_v$ is positive semidefinite:
\[
v(\sigma^2) = \sum_{deg(T_1), deg(T_2) \leq d/2} {c}(T_1) {c}(T_2) v(T_1 T_2) = {c}^\top \mathcal{M}_v {c} \geq 0.
\]
(For more on this see e.g. \cite{Lasserre2001}, section 2.)\\
\textcolor{black}{
	Recall that the unary \emph{Least Number Principle ($\LNP_n$)} has the following set of SA axioms:
	\begin{gather}
	\textit{self}: P_{i, i} = 0 \label{self2} \quad \forall \; i \in n\\
	\textit{trans}: P_{i, k} - P_{i, j} - P_{j, k} + 1 \geq 0 \quad \forall \; i, j, k \in [n] \label{trans2} \\
	\textit{impl}: P_{i, j} - S_{i, j} \geq 0 \quad \forall \; i, j \in [n]  \label{impl2}\\
	\textit{lower}: \sum_{i \in [n]} S_{i, j} - 1 \geq 0 \quad \forall \; j \in [n] \label{lower2}
	\end{gather}
	}
\begin{thm}
	There is no $\SA+\mathrm{Squares}$ refutation of the $\LNP_n$ with degree at most $(n-3)/2$.
\label{thm:11}
\end{thm}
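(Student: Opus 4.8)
The plan is to exhibit a valuation $v$ on terms satisfying the two conditions set out above for $d=(n-3)/2$: that $v(Xp)\ge 0$ for every axiom $p\ge 0$ and every term $X$ with $\deg(Xp)\le d$, and that the moment matrix $\mathcal{M}_v$ is positive semidefinite. Since the axioms of $\LNP_n$ are invariant under the action of the symmetric group $S_n$ permuting $[n]$, I would look for a symmetric $v$, i.e.\ one assigning the same value to any two isomorphic terms; as in the proof of Theorem~\ref{thm:8}, such a $v$ may be sought without loss of generality by averaging over all permutations of $[n]$. Concretely, I would base the order variables on a uniformly random linear order $\prec$ of $[n]$, setting $v$ of a product of $P$- and $\overline P$-literals equal to the probability that $\prec$ is consistent with the corresponding order constraints, so that acyclic constraint sets receive positive mass while any term containing a cycle, in particular anything containing $P_{i,i}$, receives value $0$. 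The witness variables $S_{i,j}$ are the obstruction: committing each $j$ to a single predecessor as its witness leaves the $\prec$-minimum unwitnessed and makes the expectation of $\sum_i S_{i,j}-1$ negative, so no genuine distribution can satisfy \textit{lower}. Instead I would give the $S$-terms a fractional symmetric valuation tied to the order variables, spreading each column's witness mass evenly over the elements below $j$, so that \textit{lower} holds as a (near) equality while \textit{impl} is respected.

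With such a $v$ in hand, the families \textit{self} and \textit{trans} are immediate from the order interpretation: $v(XP_{i,i})=0$ because no linear order places $i$ below itself, and $v\big(X(P_{i,k}-P_{i,j}-P_{j,k}+1)\big)\ge 0$ because on any consistent order the bracketed quantity is nonnegative and $X$ is a nonnegative product. The content lies in \textit{impl} and \textit{lower}, and here the degree restriction enters decisively: a term of degree at most $d=(n-3)/2$ mentions at most $2d=n-3$ elements, so at least three elements of $[n]$ are left free. This slack guarantees that, whatever order constraints $X$ imposes, enough elements remain available below $j$ for the fractional witness mass to reach the threshold demanded by \textit{lower}, and symmetrically that the scaling keeps each $v(XS_{i,j})$ below $v(XP_{i,j})$ for \textit{impl}. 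I would carry these out as direct inequalities among the symmetric values of $v$, the worst cases being terms that try to force all of their mentioned elements to one side of $j$.

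The main obstacle is establishing that $\mathcal{M}_v\succeq 0$. Because $v$ is $S_n$-symmetric, $\mathcal{M}_v$ commutes with the permutation action of $S_n$ on the space spanned by terms of degree at most $d/2$, and so decomposes into blocks indexed by the isotypic components of this representation. I would therefore reduce the single large PSD condition to nonnegativity of the eigenvalues on each block, estimating these eigenvalues from the explicit combinatorial values of $v$. The delicate point is that the relevant terms are products over \emph{directed} pairs in both the $P$- and $S$-variables, so the ambient scheme is richer than the plain Johnson scheme; controlling the interaction between the order part and the fractional witness part of $v$, and showing that every block remains nonnegative throughout the range $d\le (n-3)/2$, is where the bulk of the work lies. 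Once all blocks are shown nonnegative the two conditions are met, the refutation identity \eqref{eq:SA+Sref} evaluated under $v$ forces $-1\ge 0$, and the theorem follows.
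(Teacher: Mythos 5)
There is a genuine gap, and it originates precisely where you depart from the simplest choice for the $S$-variables. You reject identifying $S_{i,j}$ with $P_{i,j}$ on the grounds that a distribution over linear orders cannot satisfy \textit{lower} (the minimum has no predecessor), and you substitute a ``fractional'' witness valuation spreading mass over the elements below $j$. But the theorem does not require the valuation to come from a distribution satisfying the axioms; it only requires the \emph{low-degree lifts} of the axioms to be nonnegative, and that is exactly where the bound $(n-3)/2$ does its work. With $v$ defined as the probability that a uniformly random permutation is consistent with $T$, and with $S_{x,y}$ read as $P_{x,y}$, a lift $\sum_y v(TS_{x,y}) \geq v(T)$ of \textit{lower} holds because $T$ mentions at most $n-3$ elements, so there are two fresh elements $y_1 \neq y_2$ distinct from $x$ and from everything in $T$, each satisfying $\Pr[y_i \prec x \mid T] = \tfrac{1}{2}$, whence $v(TS_{x,y_1}) + v(TS_{x,y_2}) = v(T)$ already. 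The \textit{impl} lifts are then satisfied by construction. Your fractional alternative makes both \textit{impl} and \textit{lower} into unverified inequalities with claimed ``worst cases'' that are never analysed.

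More seriously, your choice destroys the easy route to positive semidefiniteness and you do not supply a replacement. When every variable (including each $S_{i,j}$) receives a genuine $0/1$ value under each permutation $\sigma$, the moment matrix is
\[
\mathcal{M}_v \;=\; \frac{1}{n!}\sum_{\sigma} V_\sigma V_\sigma^{\top},
\]
a convex combination of rank-one positive semidefinite matrices, and PSD-ness is immediate --- no isotypic decomposition, no eigenvalue estimates, no association-scheme analysis. A fractional $S$-valuation is not an expectation of $0/1$ assignments, so this argument is unavailable, and the block-diagonalisation programme you sketch (which you yourself identify as ``where the bulk of the work lies'') is left entirely uncarried-out; it is not even clear it would succeed for the valuation you describe. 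To repair the proof: set $v(T)$ to be the fraction of permutations consistent with $T$ after substituting $P_{x,y}$ for every $S_{x,y}$, check \textit{self} and \textit{trans} pointwise on permutations, handle \textit{lower} by the two-fresh-elements argument above, and conclude PSD-ness from the rank-one decomposition.
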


\begin{proof}
For each term $T$, let $v\left(T\right)$ be the probability
that $T$ is consistent with a permutation on the $n$ elements taken
uniformly at random or, in other words, the number of permutations
consistent with $T$ divided by $n!$. \textcolor{black}{Here we view $S_{x,y}$ as equal to $P_{x,y}$.} This valuation trivially satisfies
the \textcolor{black}{lifts of the \textit{self} and \textit{trans}} axioms as they are satisfied
by each permutation (linear order). \textcolor{black}{It satisfies the lifts of the \textit{impl} axioms by construction.} We now claim that the lifts of
\textcolor{black}{the \textit{lower}} Skolem axioms (those containing only $S$ variables) 
of degree up to $\frac{n-3}{2}$ are also satisfied
by $v\left(.\right)$. Indeed, let us consider the lifting by $T$
of the Skolem axiom for $x$
\begin{equation}
\sum_{y=1}^{n}T\textcolor{black}{S}_{x,y}\geq T.\label{eq:there is smaller}
\end{equation}
Since $T$ mentions at most $n-3$ elements, there must be at least
two $y_{1}\neq y_{2}$ that are different from all of them and from
$x.$ For any permutation that is consistent with $T$, the probability
that each of the $y_{1}$ and $y_{2}$ is smaller than $x$ is precisely
a half, and thus
\[
v\left(T\textcolor{black}{S}_{x,y_{1}}\right)+v\left(T\textcolor{black}{S}_{x,y_{2}}\right)=v\left(T\right).
\]
Therefore the valuation of the LHS of (\ref{eq:there is smaller})
is always greater than or equal to the valuation of $T$.

Finally, we need to show that the valuation is consistent with the non-negativity of (the
linearisation of) any squared polynomial. It is easy to see that the moment matrix for $v$ can be written as
\[
\frac{1}{n!}\sum_{\sigma}V_{\sigma}V_{\sigma}^{T}
\]
where the summation is over all permutations on $n$ elements and
for a permutation $\sigma$, $V_{\sigma}$ is its characteristic vector.
The characteristic vector of a permutation $\sigma$ is a Boolean
column vector indexed by terms and whose entries are $1$ or $0$
depending on whether the respective index term is consistent or not
with the permutation $\sigma$. Clearly the moment matrix is positive semidefinite being
a sum of (rank one) positive semidefinite matrices.
\end{proof}
\textcolor{black}{
An alternative formulation of the Least Number Principle asks that the order be total, and this is enforced with axioms \textit{anti-sym} of the form $P_{i,j} \vee P_{j,i}$, or $P_{i,j}+P_{j,i} \geq 1$, for $i\neq j \in [n]$. Let us call this alternative formulation \TLNP. Ideally, lower bounds should be proved for  \TLNP, because they are potentially stronger. Conversely, upper bounds are stronger when they are proved on the ordinary \LNP, without the total order. Looking into the last proof, one sees that the lifts of \textit{anti-sym} are satisfied as we derive our valuation exclusively from total orders. This is interesting because an upper bound in Lasserre of order $\sqrt{n}$ is known for $\TLNP_n$ \cite{potechin2018sum}. Thus, Theorem \ref{thm:11}, together with \cite{potechin2018sum}, shows a quadratic separation between \SA+Squares and Lasserre. The question of an exponential separation remains open.
}

\section{Conclusion}

Our result that the unary encoding of the Least Number Principle with equalities has \SA\ rank 2 contrasts strongly with the fact that the unary encoding of the Least Number Principle has \SA\ rank $n-2$ \cite{TCS2009}. Now we know the unary encoding of the Pigeonhole Principle has \SA\ rank $n-2$ also. This leaves one wondering about the unary encoding of the Pigeonhole Principle with equalities, which does appear in Figure \ref{tab:wavy}. In fact, the valuation of \cite{TCS2009} witnesses this still has \SA\ rank $n-2$ (and we give the argument in the appendix). That is, the Pigeonhole Principle does not drop complexity in the presence of equalities, whereas the Least Number Principle does.



\section*{Appendix}

\subsection*{Binary versus unary encodings in general}

Here we reproduce Section 6 from \cite{DantchevGalesiMartin}. Let $\C$ be some combinatorial principle  expressible  as a first-order $\Pi_2$-formula $F$ of the form
$\forall \vec x \exists \vec w \varphi(\vec x,\vec w)$ where $\varphi(\vec x,\vec w)$ is a quantifier-free formula built on a family of 
relations $\vec R$. Following Riis \cite{SorenGap} we restrict to the class of such formulae having no finite model.  

Let $\UC$  be the standard unary (see Riis in \cite{SorenGap}) CNF propositional encoding of $F$. 
For each set of first-order variables $\vec a:=\{x_1, \ldots, x_k\}$ of (first order) variables, we consider the  propositional variables $v_{x_{i_1},x_{i_2}, \dots ,x_{i_k}}$ (which we abbreviate as  $v_{\vec a}$)  whose semantics are to capture at once the value of variables in $\vec a$ if they appear in some relation in $\varphi$. For easiness of description we restrict to the case where $F$ is of the form $\forall \vec x \exists w \varphi(\vec x,w)$, \mbox{i.e.} ${\vec w}$ is a single variable $w$. Hence the propositional variables of $\UC$ are of the type $v_{\vec a}$ for $\vec a\subseteq \vec x$ (type 1 variables) and/or of the type $v_{\vec xw}$ for $w \in {\vec w}$ (type 2 variables) and which we denote by simply $v_{w}$, 
since each existential variable in $F$ depends always on all universal variables.
Notice that we consider the case of $F= \forall \vec x \exists w \varphi(\vec x,w)$, since the  generalisation to higher arity is clear as each witness $w \in {\vec w}$ may be treated individually.

\begin{definition}(Canonical form of $\BC$)
\label{def:binC}
Let $\C$ be a  combinatorial principle  expressible as a first-order formula $\forall \vec x \exists w \varphi(\vec x,w)$ with no finite models. Let $\UC$ be its unary propositional encoding. Let $2^{r-1}<n\leq 2^r \in \mathbb N$ ($r=\lceil \log n\rceil$). The binary encoding $\BC$ of $C$ is defined as follows:

The {\em \bf variables} of $\BC$ are defined from variables of $\UC$ as follows: 
\begin{enumerate}
\item For each variable of type 1 $v_{\vec a}$,  for $\vec a \subseteq \vec x$,  we use a variable $\nu_{\vec x}$,  for  $\vec a \subseteq \vec x$, and
\item  For each variable of type 2 $v_w$, we have $r$ variables $\omega_1,\ldots \omega_r$, where we use the convention that if 
$z_1\ldots z_{r}$ is the binary representation of $w$, then 
$$ 
\omega^{z_j}_{j}=\left\{
   		\begin{array}{ll}
            		\omega_{j} &  z_j=1 \\
            		\overline \omega_{j} & z_j=0
    		\end{array} \right. 
$$ 
so that $v_{w}$ can be represented using binary variables by the clause $(\omega^{1-z_1}_{1} \vee \ldots \vee \omega^{1-z_r}_{r})$
\end{enumerate}

The  {\em \bf clauses} of $\BC$ are defined form the clauses of $\UC$ as follows:
\begin{enumerate}
\item If $C \in \UC$ contains only variables of type 1, $v_{\vec b_1},\dots, v_{\vec b_k}$, hence $C$ is mapped as follows
$$
\begin{array}{lll}
C:= \bigvee_{j=1}^{k_1} v_{\vec b_j} \vee \bigvee_{j=1}^{k_2} \overline v_{\vec c_j} &\mapsto &\bigvee_{j=1}^{k_1} \nu_{\vec b_j} \vee \bigvee_{j=1}^{k_2} \overline \nu_{\vec c_j}
\end{array}
$$


\item If $C \in \UC$ contains type 1  and type 2 variables, it is mapped as follows:
$$
\begin{array}{lll}
C:= v_{w} \vee \bigvee_{j=1}^{k_1} v_{\vec c_j} \vee \bigvee_{l=1}^{k_2} \overline v_{\vec d_j} & \mapsto & \left( 
\bigvee_{i \in [r]} \omega^{1-z_i}_{i}\right ) \vee \bigvee_{j=1}^{k_1} \nu_{\vec c_j} \vee \bigvee_{l=1}^{k_2} \overline \nu_{\vec d_j}\\
C:= \overline v_{w} \vee \bigvee_{j=1}^{k_1} v_{\vec c_j} \vee \bigvee_{l=1}^{k_2} \overline v_{\vec d_j} & \mapsto & \left( 
\bigvee_{i \in [r]} \omega^{z_i}_{i}\right ) \vee \bigvee_{j=1}^{k_1} \nu_{\vec c_j} \vee \bigvee_{l=1}^{k_2} \overline \nu_{\vec d_j}\\
\end{array}
$$
where $\vec c_j,\vec d_l\subseteq \vec x$ and where $z_1,\ldots,z_r$ is the binary representation of $w$.

\item If $n\neq 2^r$, then, for each $n<a\leq 2^r$ we need clauses
\[ \omega_{1}^{1-a_1} \vee \ldots \vee \omega_{r}^{1-a_r} \]
where $a_1,\ldots,a_r$ is the binary representation of $a$.
\end{enumerate}
\end{definition}

\subsection*{Equality version of Proposition 11 from \cite{TCS2009}}

The notation for the following proposition will make sense only in light of Section 4.2 from \cite{TCS2009}.

\

\noindent \textbf{Proposition 11bis}.
The given valuation is valid for the $(n-3)$rd lifts of the equalities. That is, for all $i$, if $|\Phi| \leq n-3$ then
\[ P_{\Phi \wedge (i,1)} + P_{\Phi \wedge (i,2)} + \ldots + P_{\Phi \wedge (i,n)} = P_\Phi. \]

\begin{proof}
Suppose that $i \neq n$ and $\Phi$ contains no instances of $n$. Let $P'$ be the proportion of $\pi \in \ppi$, consistent with $\Phi$, that leave $i$ unmapped. It follows from our model counting that
\[ P_{\Phi \wedge (i,1)} + P_{\Phi \wedge (i,2)} + \ldots + P_{\Phi \wedge (i,n-1)} + P' = P_\Phi. \]
So, it suffices to prove that
\[ P_{\Phi \wedge (i,n)} = P'. \]
There must be some $j' \in [n-1] \setminus \{1\}$, s.t. $j'$ does not appear on the right-hand side of any atom in $\Phi$, whereupon  $P_{\Phi \wedge (i,n)}= P_{\Phi \wedge (i,j')}$ and we must prove:
\[ P_{\Phi \wedge (i,j')} = P'. \]
We do this by demonstrating an bijection\footnote{This is where our proof differs from Proposition 11 from \cite{TCS2009} where an injection was sufficient.} from the set 
\[\{ 
\pi \in \ppi : \mbox{ $\pi$ is consistent with $\Phi$ and $\pi(i)$ is undefined}\}\] to the set 
\[\{ 
\pi \in \ppi : \mbox{ $\pi$ is consistent with $\Phi \wedge (i,j')$}\}.\]
Given $\pi$ in the former set, with $i$ the unmapped element. Let $i'$ be the element that is mapped to $j'$. Construct $\pi'$ from $\pi$ by substituting $i \mapsto j'$ for $i' \mapsto j'$. The function given by $\pi \mapsto \pi'$ is a bijection, and the result follows.

If $i=n$ or $n$ appears in the left-hand side of an atom $\Phi$ (resp., $n$ occurs in the right-hand side of an atom of $\Phi$) then there must be some $i' \in [n-1] \setminus \{ i\}$ (resp., $j' \in [n-1] \setminus \{1\}$) s.t. $i'$ does not appear in the left-hand side (resp., $j'$ in the right-hand side) of any atom of $\Phi$. It is clear by symmetry that the inequality of the proposition holds iff
\[ P_{\langle i',j' \rangle (\Phi \wedge (i,1))} + P_{\langle i',j' \rangle (\Phi \wedge (i,2))} + \ldots + P_{\langle i',j' \rangle (\Phi \wedge (i,n))} = P_{\langle i',j' \rangle (\Phi)}, \]
and the result follows by the previous argument.
\end{proof}
\end{document}